\newcommand{\between}{\hspace{-0.5mm}\boldsymbol{\cdot}\hspace{-1.5mm}-\hspace{-1.5mm}\boldsymbol{\cdot}\hspace{-0.5mm}}
\newcommand{\curly}{\mathrel{\leadsto}}
\newcommand{\cross}[4]{(#1\rightarrow #2 \curly #3\rightarrow #4)}
\newcommand{\opt}[1]{{#1-OPT}}
\newtheorem{lemma}{Lemma}
\newtheorem{theorem}{Theorem}
\newtheorem{corollary}{Corollary}
\newcommand{\Qed}{\hfill\rule{1.2ex}{1.2ex}}
\newcommand{\myproof}[1]{{\sc Proof:}#1 \Qed }
\newcommand{\mycite}[1]{(\cite{#1})}
\newcommand{\myquote}[1]{``#1''}
\newcommand{\comment}[1]{}
\newcommand{\FOR}{{\bf for\ }}
\newcommand{\DO}{{\bf do\ }}
\newcommand{\IF}{{\bf if\ }}
\newcommand{\THEN}{{\bf then\ }}
\newcommand{\WHILE}{{\bf while\ }}
\newcommand{\ENDWHILE}{{\bf endwhile\ }}
\newcommand{\RETURN}{{\bf return\ }}
\newcommand{\D}{\Delta}
\newcommand{\resetorigin}[2]{
	
	\tikzset{shift={(#1,#2)}}
}
\newcommand{\good}{large}
\newcommand{\prob}[1]{{\bf Pr}\left[\, #1\, \right]}
\newcommand{\probd}[2]{{\bf Pr}[\, #2\, ]_{#1}}
\newcommand{\expected}[1]{\mathbb{E}\left[\, #1\, \right]}
\newcommand{\expectedd}[2]{\mathbb{E}\left[\, #2\, \right]_{#1}}
\title{Algorithmic strategies for finding the best TSP 2-OPT move in average
sub-quadratic time}
\author{Giuseppe Lancia \thanks{Dipartimento di Matematica, Informatica e Fisica, University of Udine, 33100 Udine, Italy, {\tt giuseppe.lancia@uniud.it}}\and 
Paolo Vidoni\thanks{Dipartimento di Scienze Economiche e Statistiche, University of Udine, 33100 Udine, Italy, {\tt paolo.vidoni@uniud.it}}}
\date{}
\begin{document}
	
	\maketitle

\begin{abstract}
	We describe an exact algorithm for finding the
	best 2-OPT move which, experimentally, was observed to 
	be much faster than the standard quadratic approach. 
	To analyze its average-case complexity, we introduce a family of
	heuristic  procedures and discuss their complexity when
	applied to a random tour in graphs whose edge costs 
	are either uniform random numbers in $[0,1]$ or  Euclidean distances 
	between random points in the plane. We prove that, 
	for any probability $p$: (i) there is a heuristic in the family
	which can find the best move with probability 
	at least $p$ in average-time $O(n^{3/2}$) for uniform instances and  $O(n)$ for Euclidean instances; (ii) the exact algorithm take lesser time 
then the above heuristic  on all instances on which the heuristic finds
	the best move.
During local search, while the tour becomes less and less random, the speed of our algorithm worsens until it becomes quadratic. We then discuss how to fine tune a successful hybrid approach, made of our algorithm in the beginning followed by the usual quadratic enumeration.
\end{abstract}

\paragraph{Keywords:}
Traveling Salesman; Combinatorial Optimization;
2-OPT Neighborhood; Heuristics; Applied Probability.		

\section{Introduction}

It is safe to say that there are very few acronyms in the Operations Research community 
more famous than TSP and \opt{2}. These acronyms indeed identify two among the most important subjects of our discipline, i.e., {\em combinatorial optimization} and {\em local search}.
The TSP (Traveling Salesman Problem)  is probably the 
most well-known combinatorial optimization problem \mycite{TSPbook3,TSPbook,TSPbook2}.
Its objective is to find a  shortest Hamiltonian cycle in a complete graph  of $n$ nodes  weighted on the arcs, and its importance stems from 
countless applications to all sorts of areas, like, e.g., scheduling, sequencing, routing, circuit printing, and so on up to computational biology and x-ray crystallography \mycite{TSPmatai}. Due to its relevance,
the problem has been extensively studied over the years, and
several programs  have been designed for both its exact and heuristic solution  \mycite{Lodi2007}. The most sophisticated algorithms (especially those based on mathematical programming formulations) 
have proved to be very effective, but at the price of a certain complexity,
pertaining both to their logic and their implementation.

Much in the same way as TSP is emblematic of all combinatorial optimization problems, \opt{2} is a prominent example of the concepts of neighborhood and 
local search procedure for an NP-hard problem.
Local search \mycite{AaLe97,PapSte} is a general paradigm for the  minimization of an objective function $f$ over a set ${\cal S}$ of feasible solutions. The main ingredient characterizing a local search procedure is
a map which associates to every solution $x\in {\cal S}$ a set $N(x)\subset {\cal S}$ called its {\em neighborhood}. Starting at a solution $x^0$, local search samples the solutions in $N(x^0)$ looking for a solution $x^1$ better than $x^0$. If it finds one such solution, it iterates the same step, this time looking for $x^2$ in $N(x^1)$, and then continues the same way until the current solution $x^i$ satisfies $f(x^i)=\min\{f(x) | x\in N(x^i)\}$, i.e., it is a {\em local optimum}. Replacing $x^i$ with $x^{i+1}$ is called performing a {\em move} of the search, and the total number of moves applied to get from $x^0$ to a local optimum is called the {\em  convergence length}.
For a small-sized neighborhood, local search usually follows a  {\em best-improvement}  strategy, i.e., $x^{i+1}$ is the  the best 
solution possible in $N(x^i)$.

The \opt{2} neighborhood associates to 
each TSP solution (also called a {\em tour}) the set of all tours that 
can be obtained by removing two edges and replacing them with two new ones.
Let us assume the set of vertices to be $[n]:=\{1,\ldots,n\}$. Then a tour is defined by a permutation     $T=(\pi_1,\ldots,\pi_n)$ of $[n]$ and a 2-OPT move $\mu(i,j)$ is 
identified by two non-consecutive edges of the tour, namely 
$\{\pi_i,\pi_{i+1}\}$ and $\{\pi_j,\pi_{j+1}\}$, called the {\em pivots} of the move.
The move removes $\{\pi_i,\pi_{i+1}\}$ and $\{\pi_j,\pi_{j+1}\}$
and replaces them with $\{\pi_i,\pi_{j}\}$ and 
$\{\pi_{i+1},\pi_{j+1}\}$, yielding the new tour $T'$ (see the figure below).

\tikzset{snake it/.style={decorate, decoration=snake}}

\begin{center}
	{\footnotesize
		\begin{tikzpicture}[xscale=1.22,yscale=0.6]			
			\draw[ thick, snake it] (0,0) arc (-90:90:1.6);
			\node at (0,-0.4) {$\pi_j$};
			\node at (-1.5,-0.4) {$\pi_{j+1}$};
			\draw[ thick, snake it] (-1.5,3.2) arc (90:270:1.6);
			\draw [dashed,thick] (0,0) -- (-1.5,0);
			\node at (0,3.6) {$\pi_{i+1}$};
			\node at (-1.5,3.6) {$\pi_i$};
			\draw [dashed,thick] (0,3.2) -- (-1.5,3.2);
			\filldraw (0,0) circle (2pt);
			\filldraw (-1.5,0) circle (2pt);
			\filldraw (0,3.2) circle (2pt);
			\filldraw (-1.5,3.2) circle (2pt);
			\draw  (0,0) -- (-1.5,3.2);
			\draw  (-1.5,0) -- (0,3.2);
		\end{tikzpicture}
	}
\end{center}

In this paper we  consider  the symmetric TSP, i.e., the graph is undirected, so that the distance between any two nodes is the same in both directions.
Denote by $c(i,j)=c(j,i)$ the distance between two generic nodes $i$ and $j$.  
The length $c(T)$ of a tour $T=(\pi_1,\ldots,\pi_n)$ is the sum of the lengths of the tour edges, i.e., the edges $\{\pi_i, \pi_{i+1}\}$, for $i=1,\ldots,n$ (where we assume $\pi_{n+1}:=\pi_1$). 
For a \opt{2} move $\mu(i,j)$, we define
\[
\Delta(\mu(i,j)) := c(T) - c(T') = c(\pi_i,\pi_{i+1}) + c(\pi_j,\pi_{j+1}) - 
\big(c(\pi_i,\pi_{j}) + c(\pi_{i+1},\pi_{j+1})\big).
\]
We say that the move is {\em improving} if $\Delta(\mu(i,j)) > 0$. An improving move is {\em best improving} if $\Delta(\mu(i,j))=\max_{u,v} \Delta(\mu(u,v))$, and the goal is to find a best-improving move.

The introduction of the \opt{2} neighborhood for the TSP 
dates back to the late fifties \mycite{flood,croes58}, and still today local search based on this neighborhood is probably the most popular approach for the TSP (especially on large instances),  for reasons of simplicity, low time-complexity and overall effectiveness. 
Indeed, there are some more sophisticated heuristics for
the TSP, such as, e.g., \opt{3} \mycite{Lin65}, or metaheuristics like genetic algorithms \mycite{TSPgenetic,TSPgenetic1}, simulated annealing \mycite{TSPSA} and tabu search \mycite{TSPtabu}. The most effective
heuristic procedure, i.e., the one for which the trade-off betweeen  quality of  solutions found and time spent in finding them is 
the best, is  a  \opt{3} variant known as Lin-Kernighan's algorithm
\mycite{LinKer,LinKer1}. However, all these sophisticated heuristics are somewhat 
complex to understand and implement, especially in comparison with the simplicity of \opt{2}. This aspect is considered very important in a large part of the industrial world, where in-house software development and  maintenance
oftentimes lead to the adoption of simple, yet effective, solutions like 
some basic local search. Indeed, in the case of \opt{2}, the algorithm
to find the best move $\mu(i,j)$ is trivial
i.e., a nested-{\tt for} cycle iterating over all $1\le i < j \le n$ and taking $\Theta(n^2)$ time.  Since, generally speaking, quadratic algorithms are  considered very effective, not very much research went into trying to speed-up the algorithm for finding the best move.
In this work, however, we will prove that speeding-up the standard quadratic
procedure for \opt{2} is {\em extremely} simple, at least as far as 
implementation goes, since it amounts to a trivial \myquote{hack} of the
nested-{\tt for} procedure consisting in the addition of just one {\tt if()} statement.
Besides this simple new version of the search procedure, we will also describe a slightly more sophisticated one, which achieves an even better 
performance. A probabilistic analysis shows that our strategies do in fact change the order of complexity from quadratic to sub-quadratic on average. In particular, 
for Euclidean TSP instances we manage to find the best move on a random tour in average time $O(n)$, i.e., the best possible complexity given that the tour has $n$ edges that must be looked at.

Most of the literature on \opt{2} focuses on the study of the
convergence length and the quality of the 
local optima that can be obtained.
In particular, \cite{chandraetal} (extending a result of \cite{kern}) have shown that the length of convergence is 
polynomial  on average for random Euclidean instances, while \cite{englertEtAl}  have shown how to build very
particular Euclidean instances on which the length of the 
convergence is exponential.
As for the local optima quality, \cite{chandraetal} shows that they are,  with high probability,  within a constant factor from the global optimum for random Euclidean instances, while worst-case factors depending on $n$ are given in \cite{thesis17}.

With regard to the time spent in finding the best move at each local 
search iteration, the nested-{\tt for} algorithm is not only worst-case $\Theta(n^2)$, but its  average-case is $\Theta(n^2)$ as well. Building on our previous research 
\mycite{LanciaVidoni_2020} 
in which we studied how to speed-up some enumerative algorithms
looking for a best solution in a polynomially-sized search space, 
we propose two variants of a new  algorithm (one guided by a greedy criterion while the other ``blind''), for
finding the best 2-OPT move. We give empirical and theoretical
evidence that our algorithm is better than quadratic,  on average, when looking for the best move on a long sequence of the tours visited by local search (roughly, two thirds of the convergence length)  starting at a random tour. 
In particular, at the initial stages of the local search, 
it takes us about $O(n^\frac{3}{2})$ to find the best move  for 
graphs  whose distances are drawn uniformly at random (u.a.r.) in $[0,1]$, and about $O(n)$ for
Euclidean graphs whose vertices are points 
drawn u.a.r. in  the unit square. In order to 
perform an average-case analysis, we introduce a family of heuristics
for finding the best \opt{2} move, and discuss both their average-case running time and their probability of success, describing how we can control
both these aspects.

On medium- to large-sized instances, our procedure can achieve speedups
of { two to three orders of magnitude} 
over the nested-{\tt for} algorithm for 
most of the convergence. However, while the search progresses and we near the local optimum, our algorithm becomes less effective, so that at some point it might be better to switch back to the $\Theta(n^2)$ enumerative procedure since it does not have the
overhead of dealing with our data structure (namely, a heap). Experimentally, we determined that this phenomenon happens in the final part of the convergence. It remains then an interesting research question the design
of an effective algorithm to find the best \opt{2} move for a nearly locally optimum tour.

\paragraph{Paper organization.} The remainder of the paper is organized as follows. In Section \ref{sec:idea} we describe the general
idea for searching the best 2-OPT move, and present the algorithm, in its two versions. Section \ref{sec:prob} is devoted to the probabilistic analysis of our algorithm and of a family of heuristics for the problem. In Section \ref{sec:results} we report on our computational experiments and the
statistical results that we obtained. Some conclusions are
drawn in Section \ref{sec:concl}.

\section{Our strategy for moves enumeration}
\label{sec:idea}

Without loss of generality, let us assume that the tour is $T=(1,\ldots,n)$.
For $\alpha>0$ let us call {\em $\alpha$-\good} any edge $\{i,i+1\}$ along the tour, such that $c(i,i+1) > \alpha$. 
In this paper we are going to follow a strategy that allows us not to enumerate all moves, but only those which are ``good candidates'' to be the
best overall. The idea is quite simple, and it relies on a sequence of iterative improvements in which, at each iteration, there is a certain move (the current ``champion'') which is the best we have seen so far and which  we want to beat.

Assume the current champion is $\hat \mu:=\mu(\hat \imath , \hat \jmath)$.
Then, for any move $\mu(i,j)$ better than $\hat \mu$ it must be
\begin{align*}
	\Delta(\hat \mu)  & < \Delta(\mu(i,j)) \\
	& =  c(i,i+1) + c(j,j+1) - 
	\big(c(i,j) + c(i+1,j+1)\big) \\
	& \le  c(i,i+1) + c(j,j+1) 
\end{align*}
and hence
\[
\left(c(i,i+1) > \frac{\D(\hat \mu)}{2}\right) \lor \left(c(j,j+1) > \frac{\D(\hat \mu)}{2}\right)
\]
i.e., at least one of the move pivots must be  $(\D(\hat \mu)/2)$-\good.  Based on this observation,  we will set-up an enumeration scheme which builds the moves starting from pivots that are  $(\D(\hat \mu)/2)$-\good\ and then completing any such edge into a move by adding the second pivot of the move.
For simplicity, from now on we will be often referring to  
$(\D(\hat \mu)/2)$-\good\ edges
simply  as 
{\em \good\ edges}.

Our basic steps are the  {\em selection} and the {\em expansion} of the tour edges. 
The selection of an edge is simply the choice of an edge $\{i,i+1\}$ (which has not been selected before). The expansion of $\{i,i+1\}$ is the evaluation of all moves  $\mu(i,j)$, for $j\ne i$. Clearly an expansion costs $\Theta(n)$ while the selection cost depends on the criterion we adopt.
In particular, we propose two versions of our 
algorithm, namely the {\em greedy} and the {\em blind} versions.
In our greedy  algorithm the cost for a selection will be $O(\log n)$, while in the blind algorithm it will be $O(1)$.

Each algorithm, at a high level, can be seen as a sequence of iterations, where each iteration is a selection followed, perhaps,  by an expansion. 
We will discuss how, given a random starting tour, as long as we expand only \good\ pivots we expect to have, overall, only
$O(n^{1-\epsilon})$, with $\epsilon>0$, expansions, even if we pick the pivots without following any particular criterion, i.e., blindly. 
If, on the other hand, we use the greedy criterion of selecting the pivots from the largest to the smallest, we still expand only $O(n^{1-\epsilon})$ pivots, but this time the multiplicative constant is better, i.e., we get a faster algorithm. 

\floatname{algorithm}{Procedure}
\begin{algorithm}[t]
	\caption{\label{alg:find3}
		{\sc GreedyHeapBasedAlgorithm} ${\cal A}_g$}
	\begin{quote}
		\noindent\\
		\mbox{\ }1.\hspace{2mm}
		Build a max-heap with  elements $[i,c(i,i+1)]$ $\forall 1\le i \le n$\\ 
		\mbox{\ }\phantom{1.}\hspace{3mm}
		sorted by $c$-values;\\ 
		\mbox{\ }2.\hspace{2mm}
		Set $\hat{\mu}:=\emptyset$ and $\D(\hat{\mu}):=-\infty$; {\small \tt /* first undefined champion */} \\ 
		\mbox{\ }3.\hspace{2mm}
		\WHILE the $c$-value of the top heap element is $>\D(\hat{\mu})/2$ \DO\\
		\mbox{\ }4.\hspace{8mm}
		Extract  the top of heap, let it be
		$[i,c(i,i+1)]$; \\
		\mbox{\ }5.\hspace{8mm}
		\FOR $j:=1,\ldots i-2, i+2 \ldots n$ \DO\\
		\mbox{\ }6.\hspace{14mm}
		\IF $\D(\mu(i,j)) > \D(\hat{\mu})$ \THEN \\ 
		\mbox{\ }7.\hspace{20mm}
		$\hat{\mu}:=\mu(i,j)$; {\small \tt /* update the champion */}\\
		\mbox {\ }8.\hspace{2mm}
		\ENDWHILE\\
		\mbox {\ }9.\hspace{2mm}
		\RETURN  $\hat{\mu}$; 
	\end{quote}
	\label{alg:greedy}
\end{algorithm}

\subsection{The greedy algorithm}

The greedy algorithm ${\cal A}_g$ is described in Procedure \ref{alg:greedy}.
In this algorithm we make use of a max-heap, in which we put each node $i\in\{1,\ldots,n\}$ together with the value $c(i,i+1)$, used to order the heap.
Each entry of the heap is indeed a 2-field record $[i,c(i,i+1)]$ (where, by convention, we define $n+1$ to be $1$). 
Building the heap (step 1) can be done in linear time with respect to the number of its elements (i.e., in time $\Theta(n)$ in our case) by using  standard procedures \mycite{cormen}.

At the beginning $\hat\mu$ is undefined, so we set $\D(\hat\mu):=-\infty$.
Testing if there are still any \good\ edges is done in step 3 and takes time $O(1)$ per test since we just need to read the value at the heap's root. The selection is done in step 4 and takes time $O(\log n)$ to maintain the heap property.
The main  loop 3--8 terminates as soon as there are no longer any \good\  edges.

At the generic step, we pop the top of the heap, let it be $[i,c(i,i+1)]$. If 
$c(i,i+1)\le \D(\hat\mu)/2$, we stop and return $\hat\mu$ as the best possible move. Otherwise, $\{i,i+1\}$ is one of the  two pivots of some  move potentially better than $\hat\mu$.
Knowing one pivot,  we then run the expansion (steps 5--7) which, in linear time, finds the best completion of  $\{i,i+1\}$ into a move $\mu(i,j)$.
Each time we find a move better than  the current champion, we update $\hat\mu$. This way the termination condition 
becomes easier to satisfy and we get closer to the end of the loop.

If, overall, there are $N_g$ selections (and, therefore $N_g$ expansions), the running time of the algorithm is $O(n + N_g(\log n + n))$ which is better than quadratic as long as $N_g=O(n^{1-\epsilon})$ for some $\epsilon > 0$.

\subsection{The blind algorithm}

The blind algorithm, called ${\cal A}_b$, is outlined in Procedure \ref{alg:blind}. The main loop consists of ${n}$ iterations. At each iteration the selection of an edge $\{i,i+1\}$ has cost $O(1)$. The expansion is done in steps 4--6. Assuming there are $N_b$ expansions altogether, the running time of this algorithm is
$O(n + N_b n)$, which is better than quadratic as long as $N_b =O(n^{1-\epsilon})$.
Notice that if we remove line 3, and start the {\tt for}-cycle of line 4 at $j:=i+2$ instead than $j:=1$, we obtain exactly the standard two-{\tt for}
quadratic procedure for 2-OPT optimization. It is then evident how simple it is to implement this procedure,
since it basically requires just to add an {\tt if} to
the standard algorithm.

\floatname{algorithm}{Procedure}
\begin{algorithm}[t]
	\caption{\label{alg:find2}
		{\sc BlindAlgorithm} ${\cal A}_b$}
	\begin{quote}
		\noindent\\
		\mbox{\ }1.\hspace{2mm}
		Set $\hat{\mu}:=\emptyset$ and $\D(\hat{\mu}):=-\infty$; {\small \tt /* first undefined champion */}\\ 
		\mbox{\ }2.\hspace{2mm}
		\FOR $i:=1,\ldots,n$ \DO\\
		\mbox{\ }3.\hspace{8mm}
		\IF $c(i,i+1) > \D(\hat{\mu})/2$ \THEN\\
		\mbox{\ }4.\hspace{14mm}
		\FOR $j:=1,\ldots,i-2,i+2,\ldots,n$ \DO\\
		\mbox{\ }5.\hspace{20mm}
		\IF $\D(\mu(i,j)) > \D(\hat{\mu})$ \THEN \\ 
		\mbox{\ }6.\hspace{26mm}
		$\hat{\mu}:=\mu(i,j)$; {\small \tt /* update the champion */}\\
		\mbox{\ }7.\hspace{2mm}
		\RETURN  $\hat{\mu}$; 
	\end{quote}
	\label{alg:blind}
\end{algorithm}

\section{Probabilistic analysis}
\label{sec:prob}

\subsection{The general plan}

In this section we discuss the average-case  complexity of our greedy algorithm, obtaining some theoretical justification of the
empirical evidence, i.e., that it is
better than quadratic for a large portion of the 
convergence to a local optimum. In particular, at the very first steps, when the current solution is a random (or almost random) tour, we observed an average complexity of $O(n^{3/2})$ on
uniform instances and $O(n)$ on Euclidean instances.
(The latter is indeed an optimal result for  this problem, since we should not expect to be able to find the best move in a shorter time  than that required to look at all the edges of the tour). Note that the blind algorithm ${\cal A}_b$ was observed to
have the same behavior, but with a slightly worse 
multiplicative constant, so we can focus on 
${\cal A}_g$. 

The following analysis is relative to the problem of finding the best move on a { random} tour. In order to explain the observed sub-quadratic complexity, we start by 
discussing weaker versions of the algorithm and prove that 
they run in average sub-quadratic time. These algorithms are 
heuristics, but we can make their probability of success as high
as we please. We then show that the running time of any of these heuristics upper bounds the running time of the greedy algorithm  on those instances on  which the heuristic succeeds (which, as we already remarked, can be almost all).

\newcommand{\ALG}{\textrm{ALG}}

\paragraph{Some preliminaries.} In the average-case analysis of an algorithm one considers instances drawn at random according to a certain probability distribution. In our study an instance is given by 
${n\choose 2}$ non-negative reals (representing the edge lengths/costs of a complete undirected graph of $n$ nodes) plus a permutation of $\{1,\ldots,n\}$ identifying a tour in the graph. 
The size of an instance can be characterized through a parameter $n$ which, for us, is the number of nodes in the graph. By a {\em random tour} we denote a permutation drawn u.a.r. in the set including all the permutations. Hereafter, we assume $n\in \mathbb{N}$, with $n\ge 4$, and, when we talk of a generic instance and of $n$ in the same sentence, $n$ is the instance size. As far as the edge costs are concerned, we consider two types of distributions:
\begin{enumerate}
	\item {\bf Uniform instances:} A random instance of this type is obtained by setting the cost of each edge $\{i,j\}$ to a value drawn u.a.r. in $[0,1]$. 
	Note that the edge lengths are independent random variables.
	\item {\bf Euclidean instances:} A random instance of this type is obtained by drawing u.a.r. $n$ points $P_1,\ldots,P_n$ in the unit square and then setting the cost of each edge $\{i,j\}$ to the Euclidean distance between $P_i$ and $P_j$. 
	Note that the edge lengths are not independent random variables since  triangle inequality must hold.
\end{enumerate}

In the following analysis, we denote by $t^n_{\cal A}(I)$ the time (i.e., number of elementary steps) taken by the algorithm $\cal A$ on an instance $I$ of size $n$ and we define the associated random variable $T^n_{\cal A}$ as the time taken by $\cal A$ on a random instance of size $n$. The random instance is generated according to a suitable probability distribution corresponding to the uniform or to the Euclidean instances framework. The average-case complexity of the algorithm $\cal A$ is then defined as  
\[
\bar T_{\cal A}(n):=\expectedd{}{T_{\cal A}^n},
\]
interpreted as a function of the size $n$. We count as the elementary steps yielding the complexity of an algorithm the number of moves that it evaluates (which, 
in turn, is a factor-$n$ proportional to the number of edges expanded).
Notice that, although an expansion for ${\cal A}_g$ requires also 
a work of cost $O(\log n)$ to determine which edge is expanded and to rearrange the heap, this $O(\log n)$ factor is dominated by the $\Theta(n)$ work due to the moves evaluated by the expansion.

\paragraph{A family of \myquote{fixed threshold} heuristics.} Let us consider a variant of our algorithm which works as follows:
Given an input $I$ of size $n$, the algorithm 
first computes a threshold $\delta_n$ (i.e., depending only on $n$, and constant for a fixed $n$) and then it expands all and only the edges $\{i,j\}$ of the tour such that $c(i,j)>\delta_n$. Notice that there is an algorithm of this type for each possible function $\delta_n$, and hence we can talk of a family of algorithms. Let us call a generic  algorithm of this family \ALG($\delta_n$). We remark that there is no heap, and the algorithm is a simple loop like  the blind version, but does not use the updated champions to set new thresholds to beat. 

Each algorithm \ALG($\delta_n$) can be seen as a heuristic for finding the best \opt{2} move. Indeed, there is no guaranteed 
that it will find the best move, but rather it will find it with a certain probability, depending on $\delta_n$ and on the distribution of instances. In particular, \ALG($\delta_n$) may fail to find the best move because of one of two types of errors, i.e., 
\begin{itemize}
	\item [ERR$_0$:] when no edge is expanded (all arcs have cost $\le \delta_n$) and hence no move will be found. 
	\item [ERR$_1$:] when  some edges are expanded, but the optimal move did not remove any edges of length $> \delta_n$ and so it won't be found. 
\end{itemize}

The probability of  failure can be controlled by a proper setting of $\delta_n$. Intuitively, by lowering (increasing) $\delta_n$ we decrease (respectively, increase) the probability of errors. At the same time, we  increase (respectively, decrease) the average time complexity of the algorithm, since more (respectively, less) edges get expanded. We will describe a way to balance these two conflicting objectives, namely, having a $\delta_n$ large enough so as to guarantee 
an average sub-quadratic algorithm, but small enough so as the
probability of errors can be upper-bounded by any given constant.

\ \\

Let $\Delta^*(I)$ denote the value of an optimal  2-OPT move on an instance $I$.
The following is a sufficient, but not necessary, condition for \ALG($\delta_n$) to find the optimal solution:

\begin{lemma}
	For every instance $I$ for which $\delta_n < \Delta^*(I)/2$, 	
	\ALG($\delta_n$) finds
	an optimal solution.
	\label{lem:gooddelta}
\end{lemma}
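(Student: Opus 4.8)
The plan is to show that the hypothesis $\delta_n < \Delta^*(I)/2$ forces the optimal move to remove at least one edge that \ALG($\delta_n$) actually expands, so the inner loop of the algorithm will evaluate that move and record it. First I would fix an instance $I$ and let $\mu(i,j)$ be an optimal $2$-OPT move, so that $\Delta(\mu(i,j)) = \Delta^*(I)$. Recall the elementary bound already derived in Section~\ref{sec:idea}: since $c(i,j)$ and $c(i+1,j+1)$ are non-negative, we have
\[
\Delta^*(I) = \Delta(\mu(i,j)) = c(i,i+1) + c(j,j+1) - \big(c(i,j)+c(i+1,j+1)\big) \le c(i,i+1) + c(j,j+1),
\]
so at least one of $c(i,i+1)$, $c(j,j+1)$ is at least $\Delta^*(I)/2$. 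By the hypothesis $\delta_n < \Delta^*(I)/2$, that same edge has cost strictly greater than $\delta_n$.

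Next I would use the definition of \ALG($\delta_n$): it expands all and only the tour edges $\{k,k+1\}$ with $c(k,k+1) > \delta_n$. By the previous paragraph, at least one of the two pivot edges of the optimal move, say $\{i,i+1\}$ (WLOG), satisfies $c(i,i+1) > \delta_n$ and is therefore expanded. The expansion of $\{i,i+1\}$ evaluates $\Delta(\mu(i,j'))$ for every $j' \neq i$ (in particular for $j' = j$), comparing each against the current champion and updating it whenever a strictly larger value is found. Hence after the expansion of $\{i,i+1\}$ the champion has value at least $\Delta(\mu(i,j)) = \Delta^*(I)$, and since no $2$-OPT move can exceed $\Delta^*(I)$, the champion value is exactly $\Delta^*(I)$ and stays there; the move returned is optimal.

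There is essentially no hard part here — the statement is a direct consequence of the ``at least one pivot is \good'' observation combined with the fact that \ALG($\delta_n$) expands every edge above its threshold. The only points requiring a word of care are the two degenerate cases: that a $2$-OPT move always has two distinct, non-consecutive pivot edges (so ``the edge $\{i,i+1\}$ is expanded'' is meaningful and its expansion loop really does reach $j$), which holds for $n \ge 4$ as assumed; and that the champion is initialized to $-\infty$, so the first evaluation of $\Delta(\mu(i,j)) = \Delta^*(I) \ge 0$ (indeed $> 2\delta_n \ge 0$) does trigger an update even if no earlier expansion occurred. I would also note in passing why the condition is only sufficient and not necessary: even when $\delta_n \ge \Delta^*(I)/2$, it can still happen by chance that some expanded edge is a pivot of the optimal move, so the algorithm may succeed anyway — this is exactly the ERR$_1$-versus-success distinction flagged before the lemma.
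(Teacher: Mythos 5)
Your proof is correct and follows essentially the same route as the paper: bound $\Delta^*(I)$ by the sum of the two pivot costs, conclude one pivot exceeds $\delta_n$ and is therefore expanded, so the optimal move is evaluated. The extra remarks on initialization and on sufficiency versus necessity are fine but not needed beyond the paper's own argument.
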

\myproof{	Assume a best move is $\mu^*(i,j)$. Then either 
	$c(i,i+1)\ge \Delta^*(I)/2$ or $c(j,j+1)\ge \Delta^*(I)/2$. Therefore one of the two edges will
	be expanded by \ALG($\delta_n$) and $\mu^*$ will
	be found.}

Given an instance $I$, let us call {\em good move} any move $\mu$ such that $\Delta(\mu) > 2\delta_n$ (notice that the property of being good for a move depends on $\delta_n$, but, for simplicity, we assume that $\delta_n$ is implicit from the context). Let us also call {\em good instance} any instance for which there exists at least one good move.
Then we have the following
\begin{corollary}
	\label{cor:goodmove}
	For every good instance $I$, 	
	\ALG($\delta_n$)  finds	an optimal solution.
\end{corollary}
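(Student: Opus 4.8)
The plan is to derive the corollary directly from Lemma~\ref{lem:gooddelta} by translating the hypothesis ``$I$ is a good instance'' into the hypothesis ``$\delta_n < \Delta^*(I)/2$'' required by that lemma. So first I would unpack the definition: if $I$ is a good instance, then by assumption there exists some move $\mu$ with $\Delta(\mu) > 2\delta_n$. Since $\Delta^*(I)$ is by definition the maximum of $\Delta(\mu(u,v))$ over all moves, we have $\Delta^*(I) \ge \Delta(\mu) > 2\delta_n$, which rearranges to $\delta_n < \Delta^*(I)/2$.

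Next, with this inequality in hand, I would simply invoke Lemma~\ref{lem:gooddelta}, whose hypothesis is exactly $\delta_n < \Delta^*(I)/2$, to conclude that \ALG($\delta_n$) finds an optimal solution on $I$. That completes the argument; the corollary is essentially a restatement of the lemma in terms of the ``good move / good instance'' vocabulary that will be convenient in the probabilistic estimates to follow (where one wants to lower-bound the probability that \emph{some} move is good, rather than reasoning about $\Delta^*(I)$ directly).

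There is no real obstacle here: the only point worth stating carefully is that the particular good move $\mu$ need not itself be the optimal move, so one should not argue about $\mu$ being expanded, but rather use $\mu$ only to certify $\Delta^*(I) > 2\delta_n$ and then let Lemma~\ref{lem:gooddelta} handle whichever move is actually optimal (one of whose pivots is guaranteed to exceed $\delta_n$). If anything, the ``harder'' conceptual work has already been done in Lemma~\ref{lem:gooddelta} via the standard observation that at least one pivot of the best move has length $\ge \Delta^*(I)/2$.
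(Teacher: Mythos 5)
Your argument is correct and coincides with the paper's own proof: a good move $\mu$ certifies $2\delta_n < \Delta(\mu) \le \Delta^*(I)$, and then Lemma~\ref{lem:gooddelta} applies. Nothing further is needed.
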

\myproof{
	Let $\mu$ be a good move in $I$. Then, $2\delta_n < \Delta(\mu) \le \Delta^*(I)$ and the conclusion follows from Lemma \ref{lem:gooddelta}.}

Furthermore, under the conditions of Corollary \ref{cor:goodmove}, we 
are sure that ${\cal A}_g$ runs faster than \ALG($\delta_n$).

\begin{lemma}
	\label{lem:dominate}
	For every good instance $I$, it is $t^n_{{\cal A}_g}(I) \le t^n_{\ALG(\delta_n)}(I)$.
\end{lemma}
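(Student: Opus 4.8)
The plan is to show that, on a good instance $I$, every edge that the greedy algorithm ${\cal A}_g$ expands is also expanded by $\ALG(\delta_n)$, and moreover that ${\cal A}_g$ expands no edge that $\ALG(\delta_n)$ skips. Since the cost of each algorithm (counted as moves evaluated, hence $\Theta(n)$ per expansion) is just $\Theta(n)$ times the number of expansions, a containment of the expanded-edge sets immediately gives the claimed inequality on running times.

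First I would track the value of the champion $\hat\mu$ maintained by ${\cal A}_g$. Because $I$ is good, by Corollary \ref{cor:goodmove} the algorithm terminates having found an optimal move, so at termination $\D(\hat\mu)=\Delta^*(I)$; and $\D(\hat\mu)$ is non-decreasing throughout the run (it only ever gets updated upward in step 7). The key monotonicity observation is: when ${\cal A}_g$ extracts an edge $\{i,i+1\}$ from the top of the heap and expands it (step 4), the heap-top condition in step 3 guarantees $c(i,i+1) > \D(\hat\mu)/2$ \emph{at that moment}; but since $\D(\hat\mu)$ only grows and eventually reaches $\Delta^*(I) > 2\delta_n$ (the instance is good), we get $c(i,i+1) > \D(\hat\mu)/2$ at the time of expansion, which does not by itself exceed $\delta_n$. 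So the more careful argument is the reverse direction: I would argue that ${\cal A}_g$ never expands an edge with $c(i,i+1) \le \delta_n$. Suppose it did; at the moment that edge reaches the heap top, every remaining edge in the heap has cost $\le c(i,i+1) \le \delta_n$, so from that point on no expansion can improve the champion beyond $2\delta_n$ (each move value is bounded by the sum of its two removed-edge lengths, both $\le \delta_n$). Hence the final champion would satisfy $\D(\hat\mu) \le 2\delta_n$, contradicting $\D(\hat\mu)=\Delta^*(I) > 2\delta_n$. Therefore every edge ${\cal A}_g$ expands has cost $> \delta_n$, i.e., it is among the edges $\ALG(\delta_n)$ expands.

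Conversely, $\ALG(\delta_n)$ expands \emph{all} edges with $c(i,i+1) > \delta_n$, so the set of edges expanded by ${\cal A}_g$ is a subset of the set expanded by $\ALG(\delta_n)$. Counting moves evaluated, $t^n_{{\cal A}_g}(I) = \Theta(n)\cdot N_g$ and $t^n_{\ALG(\delta_n)}(I) = \Theta(n)\cdot N_{\ALG}$ with $N_g \le N_{\ALG}$ (the $O(\log n)$ heap overhead per expansion for ${\cal A}_g$ is dominated by the $\Theta(n)$ expansion cost, as noted in the preliminaries), giving $t^n_{{\cal A}_g}(I) \le t^n_{\ALG(\delta_n)}(I)$.

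The main obstacle is making the "champion grows fast enough" bookkeeping rigorous: one must be careful that the termination/threshold comparison in ${\cal A}_g$ uses the \emph{current} champion value, not the final one, so the clean containment needs the observation that once the heap top drops to $\le \delta_n$ no further improvement past $2\delta_n$ is possible — this is exactly where goodness of $I$ is used, and it is the crux of the argument. A secondary subtlety is the exact accounting of elementary steps so that the per-expansion $O(\log n)$ term genuinely disappears into the $\Theta(n)$ term; I would handle this by appealing to the convention, already fixed in the "Some preliminaries" paragraph, that complexity is measured by the number of moves evaluated.
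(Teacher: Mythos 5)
Your overall strategy is the same as the paper's: show that every edge expanded by ${\cal A}_g$ has cost exceeding $\delta_n$, conclude that the set of edges expanded by ${\cal A}_g$ is contained in the set expanded by $\ALG(\delta_n)$, and then compare running times by counting moves evaluated (with the heap overhead absorbed, as the paper's accounting convention allows). However, the crux of your argument — the contradiction showing that ${\cal A}_g$ never expands an edge of cost $\le \delta_n$ — rests on a false parenthetical. You claim that once such an edge reaches the heap top, ``no expansion can improve the champion beyond $2\delta_n$ (each move value is bounded by the sum of its two removed-edge lengths, both $\le \delta_n$).'' This is not so: when a low-cost pivot $\{i,i+1\}$ is expanded, the moves $\mu(i,j)$ range over \emph{all} $j$, and the second removed edge $\{j,j+1\}$ may well be one of the high-cost edges extracted earlier; such a move can have value far above $2\delta_n$. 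Your conclusion can be repaired — any such move was already evaluated during the earlier expansion of $\{j,j+1\}$, so it cannot raise the champion above its current value, which at that moment is $<2c(i,i+1)\le 2\delta_n$ by the while-test — but this extra observation is exactly the missing ingredient, and as written the step does not stand. A cleaner route (essentially the paper's, in contrapositive form) avoids the bookkeeping entirely: since $I$ is good, $\Delta^*(I)>2\delta_n$, so some pivot $f$ of an optimal move has $c(f)\ge \Delta^*(I)/2>\delta_n$; at the moment an edge of cost $\le\delta_n$ sits on top of the max-heap, $f$ has already been extracted and expanded, hence the champion already equals $\Delta^*(I)>2\delta_n$, and the while-test $c(i,i+1)>\Delta(\hat\mu)/2$ fails — so the low-cost edge is never expanded. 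The paper phrases this directly: if $e_1,\ldots,e_k$ are the expanded edges in non-increasing cost order, then $c(e_k)\ge\Delta^*(I)/2>\delta_n$, hence all expanded edges exceed $\delta_n$.

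Two smaller points. First, you justify ``at termination $\Delta(\hat\mu)=\Delta^*(I)$'' by Corollary \ref{cor:goodmove}, but that corollary concerns $\ALG(\delta_n)$, not ${\cal A}_g$; what you actually need is the (unconditional) exactness of ${\cal A}_g$, which follows from the pivot inequality derived in Section \ref{sec:idea} and the stopping rule, and deserves a one-line justification rather than that citation. Second, your time comparison is fine once the containment is established: both algorithms evaluate the same $\Theta(n)$ moves per expansion, and the paper's convention of counting evaluated moves makes the per-expansion $O(\log n)$ heap work immaterial, as you note.
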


\myproof{
	Since there exist good moves, it is  $\delta_n \le \Delta^*(I)/2$.
	Let $e_1, \ldots, e_k$ be the sequence of edges expanded by ${\cal A}_g$. Note that  $c(e_1) \ge \cdots \ge c(e_k)$. Independently of which $e_p$, for $1\le p\le k$, is the edge whose expansion yields the optimal move, since $e_k$ was eventually expanded it must be $c(e_k)\ge  \Delta^*(I)/2$. Therefore 
	$c(e_i)\ge  c(e_k) \ge \Delta^*(I)/2 > \delta_n$ for all $i=1,\ldots,k$.
	Since   \ALG($\delta_n$) expands all edges of value $>\delta_n$,
	this implies that all edges expanded by 
	${\cal A}_g$ are also expanded by 
	\ALG($\delta_n$).}

The following lemma is useful for evaluating the average-case complexity of \ALG($\delta_n$) for every distribution over the instances.
\begin{lemma}
	\label{lem:time}
	Let $C$ be the random variable representing the cost of the edge between two random nodes of the graph. If $\delta_n$ is chosen so that	$\probd{}{C > \delta_n} = \Theta(n^{-r})$, with $r\in (0,1]$, then 
	$\bar T_{\ALG(\delta_n)}(n)=\Theta(n^{2-r})$.
\end{lemma}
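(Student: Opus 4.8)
The plan is to reduce the running time of \ALG($\delta_n$) to the number of tour edges it expands, and then estimate the expectation of that number by linearity. Recall that \ALG($\delta_n$) makes a single pass over the $n$ tour edges and expands exactly those $\{i,i+1\}$ with $c(i,i+1) > \delta_n$, each expansion evaluating $\Theta(n)$ moves. Adopting the convention fixed above that complexity counts the number of moves evaluated (plus the $O(n)$ cost of the pass), there are constants $0 < a < b$ such that, for every instance $I$ of size $n\ge 4$,
\[
a\, n\, X(I) \;\le\; t^n_{\ALG(\delta_n)}(I) \;\le\; b\, n\, X(I) + b\, n ,
\]
where $X(I)$ is the number of tour edges expanded on $I$. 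Writing $Y_i$ for the indicator of the event $c(i,i+1) > \delta_n$ on a random instance, we have $X = \sum_{i=1}^{n} Y_i$.

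Next I would compute $\expected{X}$. Each tour edge joins two distinct nodes, so on a random instance its cost has the marginal law of $C$: in the uniform model every edge cost is an independent $U[0,1]$ variable, while in the Euclidean model all edge costs are identically distributed by the exchangeability of the $n$ random points. Hence $\expected{Y_i} = \probd{}{C > \delta_n} = \Theta(n^{-r})$ for each $i$, and by linearity of expectation --- which does not require the $Y_i$ to be independent, a point that matters in the Euclidean model ---
\[
\expected{X} \;=\; \sum_{i=1}^{n} \expected{Y_i} \;=\; n\cdot\Theta(n^{-r}) \;=\; \Theta(n^{1-r}).
\]

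To conclude, I would take expectations in the two-sided bound above, obtaining $a\, n\, \expected{X} \le \bar T_{\ALG(\delta_n)}(n) \le b\, n\, \expected{X} + b\, n$, and substitute $\expected{X} = \Theta(n^{1-r})$ to get $\bar T_{\ALG(\delta_n)}(n) = \Theta(n^{2-r}) + O(n)$; since $2-r\ge 1$ for $r\in(0,1]$, the $O(n)$ term is absorbed and $\bar T_{\ALG(\delta_n)}(n) = \Theta(n^{2-r})$.

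There is no real obstacle in this argument --- it is linearity of expectation followed by asymptotic bookkeeping. The only points deserving a moment's care are (i) that each tour edge carries the marginal cost law of $C$, which is immediate for uniform instances and follows from exchangeability of the points for Euclidean instances, so that linearity applies in both settings despite the dependence among Euclidean edge costs; and (ii) that the $O(n)$ cost of the single pass, which dominates on instances where no edge is expanded, is correctly swallowed by $\Theta(n^{2-r})$ precisely because $r\le 1$.
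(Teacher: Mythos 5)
Your proof is correct and follows essentially the same route as the paper's: both compute the expected number of expanded edges as $n\,\probd{}{C>\delta_n}$ by linearity of expectation (no independence among the indicators is needed) and then multiply by the $\Theta(n)$ cost per expansion; the only cosmetic difference is that you index the indicators by the $n$ tour positions, whereas the paper indexes over all $\binom{n}{2}$ graph edges and uses the independence of tour membership and edge cost to get the same count. Your extra remark that the $O(n)$ pass is absorbed because $r\le 1$ is a harmless refinement the paper sidesteps by counting only evaluated moves.
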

\myproof{
	For each  $1\le i < j\le n$, consider the indicator variable $X_{ij}$ which is 1 if the edge $\{i,j\}$ is in the tour and its length is greater than $\delta_n$. Since these two events are independent, it is
	\[
	\expected{X_{ij}} = \prob{X_{ij}=1} = \frac{n\, \prob{C>\delta_n}}{{n\choose 2}}  \quad\textrm{ for each } 1\le i<j\le n.
	\]
	Let $Y=\sum_{ij} X_{ij}$ be the random variable representing how many edges  get expanded by \ALG($\delta_n$). It is
	\[
	\expected{Y} =\sum_{ij} \expected{X_{ij}} = {n\choose 2}\times
	\frac{n\,  \prob{C>\delta_n}}{{n\choose 2}} = n\,  \prob{C>\delta_n}.
	\]
	Since the expansion of a generic node $i$ involves $n-3$ nodes, namely all the nodes except $i-1$, $i$ and $i+2$, $T^n_{\ALG(\delta_n)} = (n-3) Y$. Then, we have
	\[
	\bar T_{\ALG(\delta_n)}(n) = \expected{T^n_{\ALG(\delta_n)}} = (n-3)\times \expected{Y} = \Theta(n^2)\, \prob{C>\delta_n}
	\]
	from which the conclusion follows.
}

%
\ \\

In the next two sections, we study our algorithms with respect to uniform and Euclidean random instances. For both types of instance distributions, we will  use the following approach:
\begin{enumerate}
	\item We set $\delta_n$ so that 
	$\prob{C >\delta_n}=\alpha n^{-r}$ for some
	constant $\alpha>0$ and  $r\in (0,1]$.
	\item We describe a specific type of good moves
	%
	and show that, asymptotically in $n$, the probability of having no good moves of our type tends to $0$ for increasing $\alpha$.
	This implies that for every $p\in [0,1)$ we can find an $\alpha$ to set $\delta_n$ so that, asymptotically, the probability 
	for an instance to be good
	is grater than $p$.
	\item We conclude that
	\ALG($\delta_n$) is a heuristic whose average-case running time is sub-quadratic that succeeds on at least a fraction $p$ of instances. 
	By Lemma \ref{lem:dominate}, this implies that, for at least a fraction $p$ of all instances,  ${\cal A}_g$ is dominated by an
	algorithm of sub-quadratic average-case running time, where $p$ can be made as close to 1 as we want.
\end{enumerate}

\subsection{Uniform random costs}
\label{sec:probuni}

As an immediate consequence of Lemma \ref{lem:time} we have the following
\begin{corollary} 
	\label{cor:tU}
	Let $\alpha>0$ be a constant and define
	\[
	\delta_n := 1 - \alpha\, n^{-1/2}.
	\]
	Then, under the uniform distribution setting for random instances, the average-case complexity of \ALG($\delta_n$) satisfies $\bar T_{\ALG(\delta_n)}(n)= \Theta(n^{3/2})$.
\end{corollary}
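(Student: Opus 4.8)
The plan is to apply Lemma \ref{lem:time} essentially verbatim; the only thing that needs checking is that the prescribed $\delta_n$ satisfies its hypothesis, i.e.\ that $\probd{}{C>\delta_n}=\Theta(n^{-r})$ with $r\in(0,1]$. So first I would recall that in the uniform-instance model the cost of each edge is drawn u.a.r.\ in $[0,1]$ and these costs are independent; in particular the random variable $C$ describing the cost of the edge between two random nodes is itself uniform on $[0,1]$. Hence $\probd{}{C>\delta_n}=1-\delta_n$.

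Next I would substitute the stated threshold $\delta_n:=1-\alpha\,n^{-1/2}$, obtaining $\probd{}{C>\delta_n}=\alpha\,n^{-1/2}$, which is of the required form $\Theta(n^{-r})$ with $r=1/2\in(0,1]$ (the constant hidden in $\Theta$ being exactly $\alpha$). Before invoking the lemma I would also dispose of a trivial well-posedness remark: for $\delta_n$ to be a legitimate threshold we want $\delta_n\in[0,1)$, and $1-\alpha\,n^{-1/2}\ge 0$ holds for all $n\ge\alpha^2$; since Lemma \ref{lem:time} and the corollary are statements about the asymptotic behaviour in $n$, restricting to such $n$ is harmless.

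Finally I would plug $r=1/2$ into the conclusion of Lemma \ref{lem:time}, which gives
\[
\bar T_{\ALG(\delta_n)}(n)=\Theta(n^{2-r})=\Theta(n^{3/2}),
\]
as asserted. I do not expect any genuine obstacle here: the entire content is the observation that $C\sim U[0,1]$ makes the tail probability $\probd{}{C>\delta_n}$ exactly $1-\delta_n$, after which the corollary is an immediate specialization of Lemma \ref{lem:time} with the single algebraic choice that turns this tail into $\alpha n^{-1/2}$. The only thing to be mildly careful about is to phrase the result asymptotically, so that the range of $n$ for which $\delta_n$ is a valid (nonnegative) threshold does not matter.
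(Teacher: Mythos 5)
Your proof is correct and follows essentially the same route as the paper: observe that $C$ is uniform on $[0,1]$ so $\probd{}{C>\delta_n}=1-\delta_n=\alpha n^{-1/2}=\Theta(n^{-1/2})$, then apply Lemma \ref{lem:time} with $r=1/2$. The extra remark about $\delta_n$ being nonnegative only for $n\ge\alpha^2$ is harmless and not needed, since the statement is asymptotic.
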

\myproof{
	Since the cost of each edge is drawn u.a.r. in $[0,1]$ it is 
	\[
	\probd{}{C > \delta_n} = 1 - \delta_n = \alpha n^{-1/2}  = \Theta(n^{-1/2}). \]
	Then, by Lemma \ref{lem:time}, it is  $\bar T_{\ALG(\delta_n)}(n)= \Theta(n^{2-1/2}) = \Theta(n^{3/2})$.
}

Let us call an edge $\{i,j\}$ {\em long} if 
\[
c_{ij} > \frac{1+\delta_n}{2} = 1 - \left(\frac{\alpha}{2}\right)n^{-1/2}
\]
and {\em short} if 
\[
c_{ij} < \frac{1-\delta_n}{2} =  \left(\frac{\alpha}{2}\right)n^{-1/2}.
\]
The specific type of good moves that we consider, called \textit{Long-Short moves} (LS-moves), are those that replace two long edges with two short ones. Indeed, for any such move $\mu$, it is
\[
\Delta(\mu) > 2(1 + \delta_n)/2 - 2( 1-\delta_n)/2 = 2\delta_n.
\]


\begin{theorem}
	For each $\alpha >0$ denote by $P_0(\alpha,n)$ the probability that 
	there is no LS-move in a random tour of $n$ nodes on a random uniform instance. Then
	\[
	\lim_{n\rightarrow \infty} P_0(\alpha,n) \le  \frac{1}{{e^{(\alpha/4)^4}}}. 
	\]
\end{theorem}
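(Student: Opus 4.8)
The plan is to produce, with probability tending to $1$, a family of $\Theta(n)$ pairwise \emph{independent} candidate LS-moves, each of which is a genuine LS-move with probability $\Theta(1/n)$; the probability that none of them succeeds is then literally a product, hence essentially $e^{-\Theta(1)}$, and the constant in the exponent works out to $(\alpha/4)^4$. Write $q_n := (\alpha/2)\,n^{-1/2}$, so that under the uniform model any fixed edge is long with probability $q_n$ and short with probability $q_n$. Since the $n$ tour edges are distinct edges of the graph, their costs are i.i.d.\ uniform, so the number $L$ of long tour edges is $\mathrm{Bin}(n,q_n)$ with mean $(\alpha/2)\sqrt n$; by a Chernoff bound, for every fixed $\epsilon\in(0,1)$ we have $\prob{L < (1-\epsilon)(\alpha/2)\sqrt n}\to 0$.

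The heart of the argument is the following construction. From the long tour edges I select a subset $L'$ no two of whose positions are cyclically consecutive along the tour; since the long positions, under the ``consecutive'' relation, induce a disjoint union of paths (for $n$ large, not all positions are long), one can always take $|L'|\ge L/2$, and $L'$ can be chosen as a deterministic function of the permutation and of the pattern of long edges alone. For every pair $\{p,p'\}\subseteq L'$ consider the $2$-OPT move on the tour edges in positions $p$ and $p'$: it removes two long edges (by the choice of $L'$) and inserts the two chords $\{\pi_p,\pi_{p'}\}$ and $\{\pi_{p+1},\pi_{p'+1}\}$, hence it is an LS-move exactly when both of these chords are short. Because no two elements of $L'$ are consecutive, none of these chords is a tour edge and the ``inner'' endpoints $\pi_{p+1}$ are disjoint from the ``outer'' endpoints $\pi_p$; consequently, as $\{p,p'\}$ ranges over the $\binom{|L'|}{2}$ pairs, the $2\binom{|L'|}{2}$ chords involved are pairwise distinct non-tour edges. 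Conditioning on the permutation and on the pattern of long edges (hence on $L'$), the costs of all these chords are i.i.d.\ uniform, so the $\binom{|L'|}{2}$ events ``this pair yields an LS-move'' are mutually independent, each of probability $q_n^2$; the conditional probability that none of them is an LS-move is therefore exactly $(1-q_n^2)^{\binom{|L'|}{2}}$, and hence
\[
P_0(\alpha,n)\ \le\ \expected{(1-q_n^2)^{\binom{|L'|}{2}}}\ \le\ \expected{e^{-q_n^2\binom{|L'|}{2}}}.
\]

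To finish I split on the event $|L'|\ge (1-\epsilon)(\alpha/4)\sqrt n$, which has probability tending to $1$ by the two preceding steps. On this event $\binom{|L'|}{2}\ge |L'|^2/4\ge (1-\epsilon)^2(\alpha^2/64)\,n$ once $n$ is large, so $q_n^2\binom{|L'|}{2}\ge (1-\epsilon)^2(\alpha/4)^4$; off the event I bound the integrand by $1$. This gives $\limsup_{n\to\infty}P_0(\alpha,n)\le e^{-(1-\epsilon)^2(\alpha/4)^4}$, and letting $\epsilon\downarrow 0$ yields $\limsup_{n\to\infty}P_0(\alpha,n)\le e^{-(\alpha/4)^4}$, which is the stated bound (interpreting the limit, if one insists it exists, as its limit superior).

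The delicate point is the construction in the second paragraph: everything hinges on choosing the candidate moves so that their newly inserted edges are always distinct from one another and never coincide with a tour edge, which is exactly why one must discard long edges sitting at consecutive tour positions. That restriction is cheap (a factor $2$ in $|L'|$), and once it is in place the bound is a product of independent terms, so no Janson- or second-moment-type argument is needed. The only remaining chore is to check that the accumulated losses — the factor $2$ from passing to $L'$, the inequality $\binom{k}{2}\ge k^2/4$, and the $(1-\epsilon)$ from the concentration of $L$ — still leave an exponent of at least $(\alpha/4)^4$, which the computation above confirms.
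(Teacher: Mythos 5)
Your proof is correct and takes essentially the same route as the paper's: you restrict attention to pairs of long tour edges chosen so that all the inserted chords are pairwise distinct non-tour edges (the paper achieves this by keeping only odd-indexed tour edges, you by taking a non-consecutive subset of the long positions, each costing the same factor of two), so that conditionally the candidate LS-moves succeed independently with probability $q_n^2$, and then you control $\mathbb{E}\bigl[(1-q_n^2)^{\binom{|L'|}{2}}\bigr]$ via concentration of the binomial count of long edges. The only cosmetic differences are a one-sided Chernoff bound and a $\limsup$ in place of the paper's two-sided Chebyshev window and exact limit computation, and the exponent $(\alpha/4)^4$ comes out the same.
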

\myproof{
	Let $p_n:=(\alpha/2)n^{-1/2}$ be the probability for an edge to be long, which is the same as the probability to be short. Let the tour be $\pi = (\pi_1,\ldots,\pi_n,\pi_1)$. Then, $P_0(\alpha,n)$ is the probability that there is no cycle $(\pi_i,\pi_{i+1},\pi_{j+1},\pi_j,\pi_i)$ of four edges (two in the tour and two not in the tour) such that $\{\pi_i,\pi_{i+1}\}$ and $\{\pi_j,\pi_{j+1}\}$ are long while $\{\pi_{i+1},\pi_{j+1}\}$ and $\{\pi_i,\pi_j\}$ are short.

	We can then think of two Bernoulli trials, in sequence, where the first trials determine long edges along the tour and the second trials determine 
	good moves for pairs of long edges along the tour. To obtain independence for the second trials, we will consider moves that remove either two odd-indexed edges or two even-indexed edges. Let us focus on the odd-indexed edges.
	The first set of Bernoulli trials is repeated $n/2$ times, i.e., for all edges $\{\pi_i,\pi_{i+1}\}$ where $i$ is odd, and the probability of success is $p_n$. We have a success if the edge $\{\pi_i,\pi_{i+1}\}$ is long. Assume there have been $k$ successes altogether. Then,  the second Bernoulli trials are repeated ${k\choose 2}$ times, one for each pair $\{\pi_i,\pi_{i+1}\}$, $\{\pi_j,\pi_{j+1}\}$ of long edges . The probability of success is $p_n^2$, and there is a success if both $\{\pi_{i+1},\pi_{j+1}\}$ and $\{\pi_i,\pi_j\}$ receive a short length. Note that these trials are independent, since for every two pairs $P$ and $P'$ of long tour edges, the sets of non-tour edges that define the \opt{2} move for $P$ and for $P'$ are disjoint. 
	
	The probability of having no LS-moves at all is upper bounded by the probability of having no LS-moves of the above type, i.e., having zero successes in the second Bernoulli trials. By the laws of
	binomial distributions, this probability is 
	\[
	P'_0(\alpha,n) = \sum_{k=0}^{n/2} {\frac{n}{2}\choose k} p_n^k (1-p_n)^{\frac{n}{2}-k} (1-p_n^2)^{k\choose 2}.
	\]

	Let $S_n\sim\textrm{Binomial}(n/2,p_n)$. Then $P'_0(\alpha,n)$
	can be written as
	\[
	P'_0(\alpha,n)=\expected{\Big(1-\alpha^2/(4n)\Big)^{S_n \choose 2}}.
	\]
	
	Now fix $\delta\in(0,\frac{1}{4})$, 
	and define $a_n=\alpha\sqrt{n}/4-n^{1/4+\delta}$
	and $b_n=\alpha\sqrt{n}/4+n^{1/4+\delta}$. Also, consider the event $A_n=\{S_n\in[a_n,b_n]\}$, and let its complement be $A^c_n$. By the Chebyshev's inequality,
	
	\[
	\prob{A^c_n}\le \frac{\textbf{Var}(S_n)}{n^{\frac{1}{2}+2\delta}} \le \frac{C}{n^{2\delta}}
	\]
	for some absolute constant $C>0$
	and hence $\prob{A^c_n}\rightarrow 0$
	as $n\rightarrow \infty$. 
	Then by noting that $k\mapsto (1-p^2)^{k\choose 2}$
	is decreasing in $k$, we get
	\[
	(1-\alpha^2/(4n))^{b_n\choose 2} \prob{A_n}\le P'_0(\alpha,n) \le (1-\alpha^2/(4n))^{a_n\choose 2} \prob{A_n}+\prob{A^c_n},
	\]
	and it is easy to check that both the lower and upper bound converge to $e^{-\alpha^4/2^8}$, which is the limit of $P'_0(\alpha,n)$. Since 
	$P_0(\alpha,n)\le P'_0(\alpha,n)$, the conclusion follows.
}

\begin{corollary}
	\label{cor:pgoodU}
	For each $p\in[0,1)$ there exist a value for $\alpha>0$ and an integer $n_0$ such that, for each  $n\ge n_0$ and  uniform random instance of size $n$, it is
	$\probd{}{\textrm{The instance is good }} > p$.
\end{corollary}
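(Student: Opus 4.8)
The plan is to obtain this corollary directly from the previous Theorem, using the fact that the existence of an LS-move already certifies that an instance is good. First I would recall that, by the elementary bound established just after the definitions of long and short edges, every LS-move $\mu$ satisfies $\Delta(\mu) > 2\delta_n$; hence an LS-move is in particular a good move, and therefore any instance that admits an LS-move is a good instance. Consequently, for a uniform random instance of size $n$ with the choice $\delta_n := 1 - \alpha n^{-1/2}$ of the previous Corollary, we get the pointwise inequality
\[
\probd{}{\textrm{The instance is good }} \ge \probd{}{\textrm{there is an LS-move}} = 1 - P_0(\alpha,n).
\]

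Next, given $p \in [0,1)$, I would choose $\alpha$ large enough that the asymptotic bound supplied by the Theorem is \emph{strictly} below $1-p$. Since $1/e^{(\alpha/4)^4} \to 0$ as $\alpha \to \infty$, it suffices to take any
\[
\alpha > 4\left(\ln \tfrac{1}{1-p}\right)^{1/4},
\]
which, after taking reciprocals and logarithms, is exactly equivalent to $1/e^{(\alpha/4)^4} < 1-p$ (and reduces to the harmless requirement $\alpha > 0$ in the case $p=0$, since then $\ln\frac{1}{1-p}=0$).

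Then I would invoke the Theorem for this fixed $\alpha$: it gives $\lim_{n\to\infty} P_0(\alpha,n) \le 1/e^{(\alpha/4)^4} < 1-p$. By the definition of the limit, there exists an integer $n_0$ such that $P_0(\alpha,n) < 1-p$ for every $n \ge n_0$. Combining this with the first displayed inequality yields $\probd{}{\textrm{The instance is good }} \ge 1 - P_0(\alpha,n) > p$ for all $n \ge n_0$, which is precisely the claim.

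There is essentially no obstacle here: the statement is a repackaging of the Theorem. The only point demanding a little care is that the Theorem controls the \emph{limit} of $P_0(\alpha,n)$ rather than $P_0(\alpha,n)$ itself, so one must pick $\alpha$ so that the limiting bound is \emph{strictly} smaller than $1-p$ in order to convert an asymptotic statement into an eventual (large-$n$) one; this is exactly the role of the strict inequality in the choice of $\alpha$ above.
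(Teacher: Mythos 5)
Your proof is correct and follows essentially the same route as the paper: pick $\alpha > 4\sqrt[4]{\ln\frac{1}{1-p}}$ so that $e^{-(\alpha/4)^4} < 1-p$, apply the theorem bounding $\lim_{n\to\infty}P_0(\alpha,n)$ to get $\probd{}{\textrm{there is an LS-move}} > p$ for all $n\ge n_0$, and conclude because LS-moves are good moves. Your explicit passage from the limit bound to an eventual bound (and the strict inequality in the choice of $\alpha$) is exactly what the paper does implicitly, so there is nothing to add.
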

\myproof{
	Let $\alpha>0$ be such that $e^{-(\alpha/4)^4}> 1-p$, i.e., $\alpha > 4 \sqrt[4]{\ln(1-p)^{-1}}$.
	Then 
	\begin{align*}
		\lim_{n\rightarrow \infty} \probd{}{\textrm{There is a least one LS-move}} &  = 1 - \lim_{n\rightarrow \infty} P_0(\alpha,n)\\
		& \geq 1 - \frac{1}{{e^{(\alpha/4)^4}}}\\
		& > p
	\end{align*}
	and therefore, from some $n_0$ on, it is $\probd{}{\textrm{There is a least one LS-move}} > p$.
	Since LS-moves are good moves, the conclusion follows.
}

Notice that we are discussing a lower bound to the probability of some specific good moves, and these are in turn a subset of  all good moves, so that it is possible to obtain the same probability of no errors with an $\alpha$ smaller than that suggested by the corollary, as we will show in our computational experiments.

Finally, the following theorem bounds probabilistically the complexity of ${\cal A}_g$ on uniform instances via $\Theta(n^{3/2})$ functions.

\begin{theorem}
	\label{teo:avgU}
	Consider the uniform distribution setting for random instances. Then, for each $p\in[0,1)$ there exists an algorithm \ALG($\delta_n$),
	with  $\bar T_{\ALG(\delta_n)}(n)=\Theta(n^{3/2})$, and an integer $n_0$ such that, for each $n\ge n_0$, it is
	\[
	\probd{}{T^n_{{\cal A}_g} \le T^n_{\ALG(\delta_n)}}  > p.
	\]
\end{theorem}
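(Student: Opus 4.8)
The plan is to assemble the statement from the three building blocks already established: Corollary \ref{cor:tU}, which pins down the running time of the heuristic; Corollary \ref{cor:pgoodU}, which lets us make the probability that an instance is good as close to $1$ as we wish; and Lemma \ref{lem:dominate}, which guarantees that ${\cal A}_g$ never runs longer than \ALG($\delta_n$) on a good instance. So the ``proof'' is essentially a composition of these results, with only a little bookkeeping about thresholds.

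Concretely, I would proceed as follows. Fix $p\in[0,1)$. First, apply Corollary \ref{cor:pgoodU} to obtain a constant $\alpha>0$ and an integer $n_1$ such that, for every $n\ge n_1$, a uniform random instance of size $n$ is good with probability strictly greater than $p$. With this $\alpha$ now fixed, set $\delta_n:=1-\alpha\,n^{-1/2}$ and let \ALG($\delta_n$) be the associated fixed-threshold heuristic. By Corollary \ref{cor:tU}, this choice of $\delta_n$ gives $\probd{}{C>\delta_n}=\alpha\,n^{-1/2}=\Theta(n^{-1/2})$ and hence, via Lemma \ref{lem:time}, $\bar T_{\ALG(\delta_n)}(n)=\Theta(n^{3/2})$, which establishes the first claim of the theorem.

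Second, I would observe that, on the common probability space of the $\binom{n}{2}$ edge costs together with the random tour, the event $\{\,\text{the instance is good}\,\}$ is contained in the event $\{T^n_{{\cal A}_g}\le T^n_{\ALG(\delta_n)}\}$: indeed, Lemma \ref{lem:dominate} says precisely that $t^n_{{\cal A}_g}(I)\le t^n_{\ALG(\delta_n)}(I)$ for every good instance $I$, so if the sampled instance $I$ is good then $T^n_{{\cal A}_g}=t^n_{{\cal A}_g}(I)\le t^n_{\ALG(\delta_n)}(I)=T^n_{\ALG(\delta_n)}$. Taking $n_0:=n_1$ (or the maximum of $n_1$ and any threshold implicit in the $\Theta(n^{3/2})$ estimate, should one be needed), we get for all $n\ge n_0$
\[
\probd{}{T^n_{{\cal A}_g}\le T^n_{\ALG(\delta_n)}}\ \ge\ \probd{}{\textrm{The instance is good}}\ >\ p,
\]
which is the second claim.

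I do not expect a genuine obstacle here: the theorem is a synthesis of the preceding lemmas and corollaries. The only points that need a little care are making sure that ``good instance'', $T^n_{{\cal A}_g}$, and $T^n_{\ALG(\delta_n)}$ are all measured with respect to the same random-instance distribution (the joint law of edge costs and tour described in the preliminaries), and combining the two $n_0$'s coming from Corollaries \ref{cor:tU} and \ref{cor:pgoodU} into a single threshold. Everything else is immediate from the quoted results.
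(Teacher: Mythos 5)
Your proposal is correct and follows essentially the same route as the paper's own proof: fix $\alpha$ via Corollary \ref{cor:pgoodU}, set $\delta_n=1-\alpha n^{-1/2}$ so that Corollary \ref{cor:tU} gives $\bar T_{\ALG(\delta_n)}(n)=\Theta(n^{3/2})$, and use Lemma \ref{lem:dominate} to lower-bound the probability by that of a good instance. The only cosmetic difference is that you phrase the last step as an event inclusion, whereas the paper conditions on the instance being good; the content is identical.
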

\myproof{
	By Corollary \ref{cor:pgoodU} we can find $\alpha>0$ and $n_0$ such that, for $n\ge n_0$, the probability of a good instance is greater than $p$. If we set $\delta_n:=1 -\alpha n^{-1/2}$, by Corollary \ref{cor:tU} we have $\bar T_{\ALG(\delta_n)}(n) = \Theta(n^{3/2})$. Since $T^n_{\cal A}$ is defined as the (random) time taken by an algorithm $\cal A$ on a (random) instance of size $n$, we can usefully distinguish between the case in which the instance is good and the case in which it is not good, so that 
	\begin{align*}
		\probd{}{T^n_{{\cal A}_g}\le T^n_{\ALG(\delta_n)}} & = \probd{} {\textrm{instance is good}}\times \probd{}{T^n_{{\cal A}_g} \le T^n_{\ALG(\delta_n)}\, |\, \textrm{instance is good}}\\
		& + \probd{}{\textrm{instance is not good}} \times \probd{}{T^n_{{\cal A}_g} \le T^n_{\ALG(\delta_n)}\, |\, \textrm{instance is not good}}.
	\end{align*}
	By Lemma \ref{lem:dominate}, $\probd{}{T^n_{{\cal A}_g} \le T^n_{\ALG(\delta_n)}\, |\, \textrm{instance is good}}=1$ and then, $\forall n\ge n_0$,
	\begin{align*}
		\probd{}{T^n_{{\cal A}_g}\le T^n_{\ALG(\delta_n)}} 
		& \ge \probd{}{\textrm{instance is good}} \times \probd{}{T^n_{{\cal A}_g} \le T^n_{\ALG(\delta_n)}\, |\, \textrm{instance is good}}\\
		& = \probd{}{\textrm{instance is good}} \\
		& > p.
	\end{align*}
}

\subsection{Random Euclidean instances}
\label{sec:probgeo1}

We start with a lemma whose proof is in Appendix A.
\begin{lemma}
	\label{lem:maxd1}
	Let $1.055 < d \le \sqrt{2}$ and let $D$ be the distance between two random points drawn uniformly in the unit square. Then  
	\[
	\prob{D > d} \le  \frac{7}{16} \left(1 - \sqrt{d^2 - 1}\right)^4.
	\]
\end{lemma}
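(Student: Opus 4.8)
The plan is to compute, or rather upper-bound, the area of the region
$$R_d = \{ (P,Q) \in ([0,1]^2)^2 : |P-Q| > d \}$$
since $\prob{D>d}$ equals exactly the $4$-dimensional Lebesgue measure of $R_d$. Because $d > 1$, the two points must lie in ``opposite corners'' of the square: writing $P=(x_1,y_1)$, $Q=(x_2,y_2)$, the condition $|P-Q|^2 = (x_1-x_2)^2 + (y_1-y_2)^2 > d^2$ with $d^2 > 1$ forces both coordinate gaps to be reasonably large. First I would reduce to a two-dimensional computation: by Fubini, $\prob{D>d} = \int\!\!\int \mathbf{1}[u^2+v^2 > d^2]\, g(u)g(v)\, du\, dv$, where $u = x_1-x_2$, $v = y_1-y_2$ range over $[-1,1]$ and $g(t) = 1-|t|$ is the triangular density of the difference of two independent uniforms on $[0,1]$. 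By symmetry this is $4\int_0^1\!\!\int_0^1 \mathbf{1}[u^2+v^2>d^2](1-u)(1-v)\,du\,dv$, an integral over the part of the unit square outside the quarter-disc of radius $d$.

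Next I would bound the integrand. The key elementary fact is that on the relevant region both $u$ and $v$ are close to $1$: if $u^2+v^2 > d^2$ and $v \le 1$ then $u^2 > d^2 - 1$, so $u > \sqrt{d^2-1}$, and symmetrically $v > \sqrt{d^2-1}$. Hence the region of integration is contained in the small square $(\sqrt{d^2-1},1]^2$, on which $1-u < 1-\sqrt{d^2-1}$ and $1-v < 1-\sqrt{d^2-1}$. A crude bound would replace the integrand $(1-u)(1-v)$ by $(1-\sqrt{d^2-1})^2$ and the area of integration by that of the full small square, giving $4(1-\sqrt{d^2-1})^4$; to get the constant $7/16$ I would instead integrate $(1-u)(1-v)$ exactly over the small square (not over the disc-cut region, which only makes it smaller), obtaining $4 \cdot \big(\tfrac12 (1-\sqrt{d^2-1})^2\big)^2 = (1-\sqrt{d^2-1})^4$ — but this is still too weak for the claimed $7/16$. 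So I would sharpen by actually accounting for the quarter-disc being removed: the constraint $u^2 + v^2 > d^2$ excludes a corner of the small square, and on the range $1.055 < d \le \sqrt 2$ one can check that the excluded corner always contains a definite fraction of the integral of $(1-u)(1-v)$, leaving at most $\tfrac{7}{16}$ of the full small-square value $(1-\sqrt{d^2-1})^4$. This last monotonicity-in-$d$ estimate is where the hypothesis $d > 1.055$ enters: below that threshold the small square is large enough that the simple containment argument no longer yields the clean factor $7/16$.

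The main obstacle I anticipate is the last step: getting the sharp constant $7/16$ rather than a lazy constant like $1$. This requires a genuine (though still elementary) estimate of $\int\!\!\int_{S \cap \{u^2+v^2 > d^2\}} (1-u)(1-v)\,du\,dv$ where $S = (\sqrt{d^2-1},1]^2$, showing that the quarter-disc removes at least a $9/16$ fraction of the unweighted-by-the-disc integral uniformly over $d \in (1.055, \sqrt 2\,]$. The natural way is to substitute $u = \sqrt{d^2-1} + s$, $v = \sqrt{d^2-1} + t$ to rescale $S$ to a fixed-size square, express the disc constraint in the new variables, compute or bound the resulting integral as a function of the single parameter $\sqrt{d^2-1} \in (\sqrt{0.055}\,, 1]$, and verify the worst case occurs at the endpoint (which is why the statement is an inequality, tight presumably at $d = \sqrt 2$ where both sides vanish, and the constant $7/16$ is dictated by the $d \to 1.055^+$ behaviour). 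Since the lemma is relegated to Appendix A, I expect the authors carry out exactly this kind of explicit two-variable integral estimate.
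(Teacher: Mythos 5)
Your route is genuinely different from the paper's. The paper's Appendix~A proof is purely geometric: if two points are at distance $>d$, at least one lies outside the circle of radius $d/2$ centred at the centre of the square, hence inside a corner triangle $T$ with legs $z=\frac12(1-\sqrt{d^2-1})$; the hypothesis $d>1.055$ is used only to check (via $\sqrt{1+z^2}\le d$) that points in non-opposite corner regions are within distance $d$, so the second point must lie in the opposite corner triangle $L$ with legs $2z$; summing the area products $\frac{z^2}{2}\cdot\frac{(2z)^2}{2}+\frac{3z^2}{2}\cdot\frac{z^2}{2}=\frac74 z^4$ over the pairs of opposite corners gives $7z^4=\frac{7}{16}\bigl(1-\sqrt{d^2-1}\bigr)^4$. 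Your reduction to the triangular densities of the coordinate differences, the factor $4$ by symmetry, and the containment of the integration region in the small square $(\sqrt{d^2-1},1]^2$ are all correct, and this analytic route is viable — indeed it can yield a better constant than the paper's.

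The gap is at exactly the step that constitutes the lemma: the claim that removing the quarter-disc cuts the weighted integral down to at most $7/16$ of the small-square value, uniformly in $d$, is only asserted (``one can check''), and the heuristics you attach to it are wrong. The worst case of $\prob{D>d}/\bigl(1-\sqrt{d^2-1}\bigr)^4$ is not at $d\to1.055^+$ (there the ratio is about $0.07$) but at $d\to\sqrt2$, where it tends to $1/6$; the constant $7/16$ is an artifact of the paper's covering argument, not dictated by the true behaviour; the hypothesis $d>1.055$ is not needed on your route at all (even at $d\to1^+$ one gets $\prob{D>1}=19/6-\pi\approx0.025$); and $1.055^2-1=0.113$, not $0.055$. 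The missing estimate is true and can be completed in one line inside your framework: for $u\in[\sqrt{d^2-1},1]$ one has $1-\sqrt{d^2-u^2}=\frac{u^2-(d^2-1)}{1+\sqrt{d^2-u^2}}\le u-\sqrt{d^2-1}$, hence
\[
\prob{D>d}=2\int_{\sqrt{d^2-1}}^{1}(1-u)\bigl(1-\sqrt{d^2-u^2}\bigr)^2\,du\;\le\;2\int_{\sqrt{d^2-1}}^{1}(1-u)\bigl(u-\sqrt{d^2-1}\bigr)^2\,du=\frac{1}{6}\bigl(1-\sqrt{d^2-1}\bigr)^4,
\]
which implies the stated bound with room to spare and for all $1<d\le\sqrt2$. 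Without some such argument, the proposal as written does not yet prove the lemma.
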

Since, by our computational experiments we observed that 
the algorithm ${\cal A}_g$ has an average linear time complexity, it must expand a fraction $\Theta(n^{-1})$ of the edges.
\begin{corollary} 
	\label{cor:tlinear}
	Let $\alpha>0$ be a constant and define
	\[
	\delta_n := \sqrt{2} - \alpha\, n^{-1/4}.
	\]
	Then, under the Euclidean distribution setting for random instances, the average-case complexity of \ALG($\delta_n$) satisfies $\bar T_{\ALG(\delta_n)}(n) = \Theta(n)$.
\end{corollary}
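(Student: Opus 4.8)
The plan is to reduce everything to Lemma \ref{lem:time} with the exponent $r=1$: if we can show that the distance $D$ between two points drawn u.a.r. in the unit square satisfies $\prob{D > \delta_n} = \Theta(n^{-1})$, then, since in the Euclidean setting the random variable $C$ of Lemma \ref{lem:time} is exactly this $D$, we immediately get $\bar T_{\ALG(\delta_n)}(n) = \Theta(n^{2-1}) = \Theta(n)$. So the whole proof is the two-sided tail estimate $\prob{D > \delta_n} = \Theta(n^{-1})$.

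For the upper bound I would use Lemma \ref{lem:maxd1}. Since $\delta_n = \sqrt{2} - \alpha\, n^{-1/4} \to \sqrt 2$, for all $n$ large enough we have $1.055 < \delta_n \le \sqrt 2$, so the lemma applies with $d=\delta_n$. Writing $d^2 - 1 = 1 - 2\sqrt 2\,\alpha\, n^{-1/4} + \alpha^2 n^{-1/2}$ and Taylor-expanding $\sqrt{\,\cdot\,}$ around $1$ gives $1 - \sqrt{d^2-1} = \sqrt 2\,\alpha\, n^{-1/4}\,(1+o(1))$, hence $\bigl(1 - \sqrt{d^2-1}\bigr)^4 = 4\alpha^4\, n^{-1}(1+o(1))$, and therefore $\prob{D > \delta_n} \le \tfrac{7}{16}\bigl(1-\sqrt{d^2-1}\bigr)^4 = O(n^{-1})$.

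For the matching lower bound a short geometric argument suffices. Set $\rho := (\alpha/3)\, n^{-1/4}$ and consider the two quarter-disks of radius $\rho$ centered at the opposite corners $(0,0)$ and $(1,1)$; for $n$ large these lie entirely inside the unit square. If $P_1$ falls in the first quarter-disk and $P_2$ in the second — an event of probability $(\pi\rho^2/4)^2$ by independence of $P_1$ and $P_2$ — then by the triangle inequality in the plane $D = |P_1 P_2| \ge \sqrt 2 - 2\rho > \sqrt 2 - \alpha\, n^{-1/4} = \delta_n$, because $2\rho = \tfrac{2\alpha}{3} n^{-1/4} < \alpha\, n^{-1/4}$. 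Hence $\prob{D > \delta_n} \ge \tfrac{\pi^2}{16}(\alpha/3)^4\, n^{-1} = \Omega(n^{-1})$. Combining the two bounds yields $\prob{D > \delta_n} = \Theta(n^{-1})$, i.e.\ $r=1$ in Lemma \ref{lem:time}, and the corollary follows.

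There is no serious obstacle here; the one point that needs genuine care is that Lemma \ref{lem:maxd1} provides only the \emph{upper} tail bound, so the complementary $\Omega(n^{-1})$ estimate must be supplied by hand — the corner computation above does exactly that — after which the result is a routine expansion plus an appeal to Lemma \ref{lem:time}.
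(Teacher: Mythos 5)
Your proposal is correct and follows the same route as the paper: apply Lemma \ref{lem:maxd1} with $d=\delta_n$ to control $\prob{C>\delta_n}$ and then invoke Lemma \ref{lem:time} with $r=1$. The one difference is that you are more careful than the paper's own two-line proof, which simply asserts $\prob{C>\delta_n}=\Theta(n^{-1})$ ``by Lemma \ref{lem:maxd1}'' even though that lemma only yields the upper bound $O(n^{-1})$ (your expansion $1-\sqrt{\delta_n^2-1}=\sqrt{2}\,\alpha n^{-1/4}(1+o(1))$ is the computation the paper leaves implicit); your corner quarter-disk argument supplies the matching $\Omega(n^{-1})$ lower bound that is genuinely needed for the $\Theta$ statement in Lemma \ref{lem:time}, so your write-up actually closes a small gap rather than introducing one.
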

\myproof{
	Since, by Lemma \ref{lem:maxd1}, it is $\probd{}{C > \delta_n} = \Theta(n^{-1})$, by Lemma \ref{lem:time} the average-time complexity of \ALG($\delta_n$) is $\Theta(n^{2-1})=\Theta(n)$.
}

Instead of discussing  how the setting of the constant $\alpha$
affects the probability of having good moves, in the following we find it is easier to rewrite 
\[
\delta_n = \sqrt{2} - (5\sqrt{2}) \lambda\, n^{-1/4}
\]
for a constant $\lambda>0$ and discuss the constant $\lambda$.

We first recall a very basic property of tours on Euclidean instances.
\begin{lemma}
	\label{lem:cross1}
	Let $T=(P_1,\ldots,P_n)$ be a tour on a Euclidean instance, where each $P_i$ is a point in the plane. Assume that edges $\{P_i,P_{i+1}\}$ and $\{P_j,P_{j+1}\}$ cross. Then the \opt{2} move $\mu(i,j)$ has value $>0$. Furthermore if 
	$\min\{c(P_i,P_{i+1}),c(P_j,P_{j+1})\} > l > u > \max\{c(P_i,P_j),c(P_{i+1},P_{j+1})\}$, 
	then  $\Delta(\mu(i,j)) > 2(l-u)$.
\end{lemma}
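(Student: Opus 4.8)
The plan is to prove Lemma~\ref{lem:cross1} in two parts, corresponding to its two assertions, using only elementary planar geometry and the triangle inequality.

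\textbf{Part 1: crossing implies improvement.} Suppose the segments $\{P_i,P_{i+1}\}$ and $\{P_j,P_{j+1}\}$ cross at a point $Q$ in the plane. First I would write
\[
c(P_i,P_{i+1}) = c(P_i,Q) + c(Q,P_{i+1}), \qquad
c(P_j,P_{j+1}) = c(P_j,Q) + c(Q,P_{j+1}),
\]
which holds because $Q$ lies on both segments. Then, by the triangle inequality applied to the triangles $P_iQP_j$ and $P_{i+1}QP_{j+1}$,
\[
c(P_i,P_j) \le c(P_i,Q) + c(Q,P_j), \qquad
c(P_{i+1},P_{j+1}) \le c(P_{i+1},Q) + c(Q,P_{j+1}).
\]
Adding the two equalities and subtracting the two inequalities gives
\[
\Delta(\mu(i,j)) = c(P_i,P_{i+1}) + c(P_j,P_{j+1}) - c(P_i,P_j) - c(P_{i+1},P_{j+1}) \ge 0.
\]
To get the \emph{strict} inequality $\Delta(\mu(i,j)) > 0$, I would argue that at least one of the two triangle inequalities is strict: if both were equalities, then $Q$ would lie on segment $P_iP_j$ and on segment $P_{i+1}P_{j+1}$, forcing all four points (together with $Q$) to be collinear, which contradicts the segments genuinely crossing (or is a degenerate configuration we may exclude). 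Hence $\Delta(\mu(i,j)) > 0$.

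\textbf{Part 2: quantitative bound.} For the second claim, assume in addition that $\min\{c(P_i,P_{i+1}),c(P_j,P_{j+1})\} > l > u > \max\{c(P_i,P_j),c(P_{i+1},P_{j+1})\}$. This is immediate from the definition of $\Delta$: the two removed edges each have length $> l$, so their sum is $> 2l$; the two inserted edges each have length $< u$, so their sum is $< 2u$; therefore
\[
\Delta(\mu(i,j)) = \big(c(P_i,P_{i+1}) + c(P_j,P_{j+1})\big) - \big(c(P_i,P_j) + c(P_{i+1},P_{j+1})\big) > 2l - 2u = 2(l-u).
\]
Note this part does not even use the crossing hypothesis or Part~1 — the bound $2(l-u) > 0$ is positive on its own.

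\textbf{Main obstacle.} There is no serious obstacle here; the only delicate point is handling degenerate cases in Part~1 (collinear points, a crossing point coinciding with an endpoint) cleanly enough to justify strictness. I would either add a genericity remark (points in general position, which holds with probability $1$ for random Euclidean instances, the setting in which the lemma is used) or spell out that a true interior crossing of the two segments forces $Q$ to be strictly inside at least one of the triangles $P_iQP_j$, $P_{i+1}QP_{j+1}$, making the corresponding triangle inequality strict. Either way the argument is short.
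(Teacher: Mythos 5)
Your proof is correct and follows essentially the same route as the paper: you decompose the two removed edges through the crossing point and apply the triangle inequality to the triangles formed with the two inserted edges, and your Part~2 is exactly the paper's one-line observation that the bound $2(l-u)$ follows directly from the stated length hypotheses. The only difference is that you justify strictness of the triangle inequalities via a collinearity argument, whereas the paper simply asserts them as strict for a genuine crossing — a minor extra precaution, not a different approach.
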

\myproof{
	Consider the following figure
	
	\begin{center}
		{\footnotesize
			\begin{tikzpicture}[scale=0.6]			
				\draw [ thick, snake it]   (5,0) to[out=-160,in=-20] (-1.5,0);
				\draw [ thick, snake it]   (5,1.6) to[out=160,in=20] (-1.5,1.6);
				\node at (5.4,-0.4) {$P_{i+1}$};
				\node at (-1.7,-0.4) {$P_j$};
				\draw [dotted,thick] (5,0) -- (5,1.6);
				\node at (5.4,2) {$P_{j+1}$};
				\node at (-1.7,2) {$P_i$};
				\node at (1.7,0.4) {$P_{\times}$};
				\draw [dotted,thick] (-1.5,0) -- (-1.5,1.6);
				\filldraw (5,0) circle (2pt);
				\filldraw (-1.5,0) circle (2pt);
				\filldraw (5,1.6) circle (2pt);
				\filldraw (-1.5,1.6) circle (2pt);
				\draw  [ thick](5,0) -- (-1.5,1.6);
				\draw  [ thick](-1.5,0) -- (5,1.6);
				
				\node at (1.7,-1.5) {(a) high-valued move};
				\resetorigin{11}{0}
				
				\draw[ thick, snake it] (5,0) arc (-90:90:0.8);
				\node at (5,-0.5) {$P_{i+1}$};
				\node at (-1.5,-0.5) {$P_{j+1}$};
				\draw[ thick, snake it] (-1.5,1.6) arc (90:270:0.8);
				\draw [dotted,thick] (5,0) -- (-1.5,0);
				\node at (5,2) {$P_j$};
				\node at (-1.5,2) {$P_i$};
				\node at (1.7,0.4) {$P_{\times}$};
				\draw [dotted,thick] (5,1.6) -- (-1.5,1.6);
				\filldraw (5,0) circle (2pt);
				\filldraw (-1.5,0) circle (2pt);
				\filldraw (5,1.6) circle (2pt);
				\filldraw (-1.5,1.6) circle (2pt);
				\draw  [thick](5,0) -- (-1.5,1.6);
				\draw  [thick](-1.5,0) -- (5,1.6);
				\node at (1.7,-1.5) {(b) low-valued move};
			\end{tikzpicture}
		}
	\end{center}
	
	Let $P_{\times}$ be the point in which $\{P_i,P_{i+1}\}$ and $\{P_j,P_{j+1}\}$ intersect.
	By triangle inequality $||P_i-P_{\times}|| + ||P_{\times}-P_j|| > ||P_i-P_j||=:c(P_i,P_j)$ and  $ ||P_{i+1} - P_{\times}|| + ||P_{\times} - P_{j+1}|| > ||P_{i+1},P_{j+1}||=:c(P_{i+1},P_{j+1})$, so that
	\begin{align*}
		\Delta(\mu(i,j)) & = c(P_i,P_{i+1}) + c(P_j,P_{j+1}) - c(P_i,P_j) - c(P_{i+1},P_{j+1}) \\
		& = ||P_i-P_{\times}|| + ||P_{\times}-P_{i+1}|| + ||P_j-P_{\times}|| + ||P_{\times}-P_{j+1}|| - c(P_i,P_j) - c(P_{i+1},P_{j+1})\\
		& > 0.
	\end{align*}
	The second part of the claim is obvious.
}

Notice that a pair of crossing edges implies an improving move, but the move's value could be high or not so high, depending on how small or large the angle $\widehat{P_i P_{\times}  P_j}$ is. In the previous figure, left, the angle is small and the move has a high value, while it is less so in the figure on the right.

Now we want to describe the specific type of good moves
that we will use for the analysis. Consider Figure \ref{fig:dcross1}, showing the unit square which has been divided into $(n^{1/4}/\lambda)\times (n^{1/4}/\lambda)=\sqrt{n}/\lambda^2$ squares, each of side $\lambda n^{-1/4}$. Four  of these squares are special, and they are labeled $A_1,A_2$ and $B_1,B_2$. An instance is a complete graph $K_n$ made of $n$ points and all line segments between them.
In the figure, we show some of  the points and  edges.

\begin{figure}[H]
	\caption{Explaining D-edges and D-crosses}
	\label{fig:dcross1}
	{\begin{center}
			\begin{tikzpicture}[scale=5.6]
				\fill[lightgray] (-1/6,2/6) -- (-2/6,2/6) -- (-2/6,1/2) -- (-1/6,1/2);
				\fill[lightgray] (1/6,-2/6) -- (2/6,-2/6) -- (2/6,-1/2) -- (1/6,-1/2);
				\fill[lightgray] (-1/2,1/6) -- (-1/2,2/6) -- (-2/6,2/6) -- (-2/6,1/6);
				\fill[lightgray] (1/2,-1/6) -- (1/2,-2/6) -- (2/6,-2/6) -- (2/6,-1/6);
				
				
				\draw (-0.5,-0.5) -- (1/2,-1/2) -- (1/2,1/2) -- (-1/2,1/2) -- (-1/2,-1/2);

				\draw [dotted](-1/2,-2/6) -- (1/2,-2/6);
				\draw [dotted](-1/2,-1/6) -- (1/2,-1/6);
				\draw [dotted](-1/2,0) -- (1/2,0);
				\draw [dotted](-1/2,2/6) -- (1/2,2/6);
				\draw [dotted](-1/2,1/6) -- (1/2,1/6);
				
				\draw [dotted](-2/6,-1/2) -- (-2/6,1/2);
				\draw [dotted](-1/6,-1/2) -- (-1/6,1/2);
				\draw [dotted](0,-1/2) -- (0,1/2);
				\draw [dotted](2/6,-1/2) -- (2/6,1/2);
				\draw [dotted](1/6,-1/2) -- (1/6,1/2);

				\draw [dotted](1/2,-1/2) -- (-1/2,1/2);
				\node at (0.62,-0.415) {$\lambda n^{-1/4}$};
				\draw [->] (0.55,-0.39) -- (0.55,-0.33);
				\draw [->] (0.55,-0.45) -- (0.55,-0.5);
				\node at (-1/4,5/12) {$A_1$};
				\node at (1/4,-5/12) {$A_2$};

				\node at (5/12,-1/4) {$B_2$};
				\node at (-5/12,1/4) {$B_1$};

				
				
				
				
				\tikzset{shift={(-1/2,-1/2)}}
				\node at (0.22,0.86)   [circle, fill, inner sep =1.5pt]{};
				\node at (0.69,0.06) [circle, fill, inner sep =1.5pt]{};
				\node at (0.11,0.79) [circle, fill, inner sep =1.5pt]{};
				\node at (0.89,0.21) [circle, fill, inner sep =1.5pt]{};
				
				\draw [ultra thick](0.22,0.86) -- (0.69,0.06) ;
				\draw [ultra thick](0.11,0.79) -- (0.89,0.21) ;
				
				\node at (0.73,0.7) [circle, fill, inner sep =1pt]{};
				\node at (0.33,0.74) [circle, fill, inner sep =1pt]{};
				\node at (0.44,0.24) [circle, fill, inner sep =1pt]{};
				\node at (0.23,0.54) [circle, fill, inner sep =1pt]{};
				\node at (0.55,0.34) [circle, fill, inner sep =1pt]{};
				\node at (0.78,0.44) [circle, fill, inner sep =1pt]{};
				\node at (0.62,0.51) [circle, fill, inner sep =1pt]{};
				\node at (0.19,0.39) [circle, fill, inner sep =1pt]{};
				\node at (0.08,0.12) [circle, fill, inner sep =1pt]{};
			\end{tikzpicture}
	\end{center}}
\end{figure}
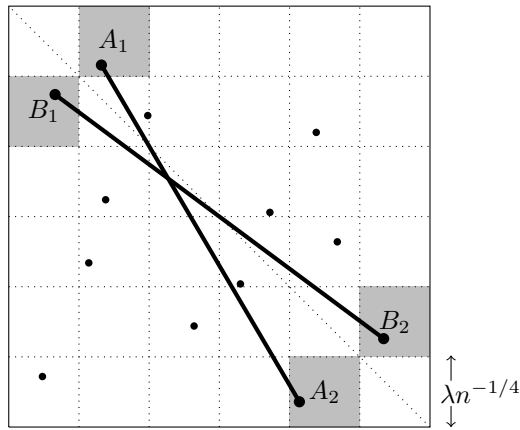

Call {\em D-edge} (for Diagonal-edge) an edge of $K_n$ which is either $A_1\between A_2$ or $B_1\between B_2$. 
Furthermore, call  {\em D-cross} (for Diagonal-cross) a pair of edges, one of which is 
$A_1\between A_2$ and the other is $B_1\between B_2$. 
Finally, call {\em C-edge} (for Corner-edge) an
edge whose endpoints are both in $A_i\cup B_i$ for $i=1,2$. Intuitively, D-edges are \myquote{long} and C-edges are \myquote{short}.

In a random instance, the tour is identified by a random permutation $\pi=(\pi_1,\pi_2,\ldots,\pi_n)$ of the nodes of $K_n$.
For each D-cross contained in the tour, if the D-cross is traversed in the right order
then there is a move which can replace two D-edges with two C-edges. 
For instance, a right order for the D-cross in the figure would be if the node in $A_1$ is labeled $\pi_i$, that in $A_2$ is $\pi_{i+1}$,  the node in $B_1$ is $\pi_j$ and that in $B_2$ is $\pi_{j+1}$, for some $i$ and  $j$. We denote this order as $\cross{A_1}{A_2}{B_1}{B_2}$. In the analysis, we consider the specific type of good moves that replace a D-cross with two C-edges, which we call {\em D-uncrossing} moves.
Since each D-edge is long at least $\sqrt{2} - 3\sqrt{2}\lambda n^{-1/4}$ and each C-edge is long at most $2\sqrt{2}\lambda n^{-1/4}$, by Lemma \ref{lem:cross1} these moves would have value greater than
\begin{equation}
	2(\sqrt2 -3\sqrt{2} \lambda n^{-1/4}) -4\sqrt{2}\lambda n^{-1/4}  = 2(\sqrt2 -5\sqrt{2} \lambda n^{-1/4}) = 2 \delta_n,
	\label{eq:valmov1}
\end{equation}
i.e., they are in fact good moves.

\begin{lemma}
	Under the Euclidean distributional setting, 
	consider the random points $P(1),\ldots,P(n)$ and the random tour $\pi=(\pi_1,\ldots,\pi_n)$. Then for each $i$, 
	\[
	\prob{(P(\pi_i)\in A_1)\land (P(\pi_{i+1})\in A_2)} = \lambda^4/n.
	\]
	\label{lem:pdedge}
\end{lemma}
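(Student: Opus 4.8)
The plan is to compute the probability as a product of two independent contributions: the probability that the two specific points $P(\pi_i)$ and $P(\pi_{i+1})$ land in the squares $A_1$ and $A_2$ respectively, and then observe that conditioning on which two points occupy positions $i$ and $i+1$ in the tour does not change anything because the points are i.i.d. uniform. More precisely, first I would note that since the permutation $\pi$ is drawn uniformly at random and independently of the point locations, for any fixed index $i$ the pair $(P(\pi_i), P(\pi_{i+1}))$ is distributed exactly as a pair of two points chosen uniformly at random (without replacement) from the $n$ i.i.d.\ uniform points; but since the points are i.i.d.\ continuous, this is the same in distribution as two independent uniform points in the unit square.

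Next I would use the fact that each small square $A_1$, $A_2$ has side $\lambda n^{-1/4}$, hence area $\lambda^2 n^{-1/2}$. By independence of the two points and the definition of the uniform distribution on the unit square,
\[
\prob{(P(\pi_i)\in A_1)\land (P(\pi_{i+1})\in A_2)} = \prob{P(\pi_i)\in A_1}\cdot \prob{P(\pi_{i+1})\in A_2} = (\lambda^2 n^{-1/2})\cdot(\lambda^2 n^{-1/2}) = \lambda^4/n.
\]
The only subtlety is the "without replacement" issue: $P(\pi_i)$ and $P(\pi_{i+1})$ are two \emph{distinct} points among the $n$ sampled points, not two independent draws. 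I would handle this by a direct counting argument: summing over the $n(n-1)$ ordered pairs $(a,b)$ of distinct point-indices that could occupy tour-positions $i$ and $i+1$, each equally likely with probability $1/(n(n-1))$, and using that $P(a)$ and $P(b)$ are independent uniform (being distinct coordinates of the i.i.d.\ sample), so $\prob{(P(a)\in A_1)\land(P(b)\in A_2)} = \lambda^4/n^2$ for every such pair; multiplying by $n(n-1)$ pairs and dividing by $n(n-1)$ gives $\lambda^4/n$ exactly.

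The main obstacle — really the only thing requiring care — is making the independence-and-exchangeability bookkeeping rigorous, i.e., being explicit that the randomness of $\pi$ and the randomness of the point set factor cleanly, and that restricting to distinct indices does not perturb the product of areas (it doesn't, since the joint density of any two distinct sampled points is the product density). Everything else is the elementary area computation $(\lambda n^{-1/4})^2 = \lambda^2 n^{-1/2}$ and squaring it. I expect the write-up to be short.
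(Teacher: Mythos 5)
Your proposal is correct and follows essentially the same route as the paper's proof, which is just the two-line version of your argument: each of $A_1,A_2$ has probability (area) $\lambda^2/\sqrt{n}$, and the points are drawn independently, so the probabilities multiply; the extra bookkeeping you add for the random permutation and the without-replacement issue is sound (and the paper simply elides it). One small slip to fix: in your counting step the per-pair probability should be $\bigl(\lambda^2 n^{-1/2}\bigr)\cdot\bigl(\lambda^2 n^{-1/2}\bigr)=\lambda^4/n$, not $\lambda^4/n^2$, consistent with your own earlier display; with that correction the weighted sum over the $n(n-1)$ ordered pairs gives $\lambda^4/n$ as claimed.
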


\myproof{
	The probability of a point drawn at random to fall in $A_k$, for $k=1,2$, is $\lambda^2/\sqrt{n}$. Since the points $P(k)$ were drawn independently of each other, the conclusion follows.
}

\begin{lemma}
	Under the Euclidean distributional setting, 
	consider  the random points $P(1),\ldots,P(n)$, and  the random tour $\pi=(\pi_1,\ldots,\pi_n)$. Let ${\cal E}_A$ be the event \myquote{$\pi$ does not traverse any
		$A_1\rightarrow A_2$ D-edge }. Then
	\[
	\prob{{\cal E}_A} \le \left(1-\frac{\lambda^4}{n}\right)^{n/2}.
	\]
	\label{cor:pnotdedge}
\end{lemma}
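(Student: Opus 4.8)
The plan is to reduce the claim to a product of independent Bernoulli ``failure'' events, exactly in the spirit of the preceding theorem's proof. Assume for simplicity that $n$ is even, and single out the $n/2$ pairwise vertex-disjoint tour edges $\{\pi_{2k-1},\pi_{2k}\}$, $k=1,\ldots,n/2$ (the ``odd-indexed'' edges). If the tour traverses no $A_1\rightarrow A_2$ D-edge whatsoever, then in particular none of these $n/2$ edges is traversed from $A_1$ to $A_2$, so
\[
\prob{\,\bigcap_{k=1}^{n/2}\overline{E_k}\,}\ \ge\ \prob{{\cal E}_A},\qquad\text{where } E_k:=\bigl\{P(\pi_{2k-1})\in A_1\ \land\ P(\pi_{2k})\in A_2\bigr\}.
\]

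The second step is to condition on the random permutation $\pi$. For a fixed $\pi$ the sequence $\pi_1,\ldots,\pi_n$ is merely a relabeling of $[n]$, so the sets $\{\pi_{2k-1},\pi_{2k}\}$ form a fixed partition of $[n]$ into pairs, and the event $E_k$ is a function of the two points $P(\pi_{2k-1})$ and $P(\pi_{2k})$ only. Since the points $P(1),\ldots,P(n)$ are mutually independent, the events $E_1,\ldots,E_{n/2}$ are mutually independent given $\pi$, and by Lemma~\ref{lem:pdedge} (whose value $\lambda^4/n$ does not depend on $\pi$) each has conditional probability $\lambda^4/n$. Hence
\[
\prob{\,\bigcap_{k=1}^{n/2}\overline{E_k}\ \Big|\ \pi\,}=\Bigl(1-\frac{\lambda^4}{n}\Bigr)^{n/2},
\]
and because the right-hand side is a constant independent of $\pi$, taking expectation over $\pi$ gives the same quantity unconditionally; combined with the first display this yields the asserted bound.

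I do not expect any genuine obstacle in this argument; the one point that must be handled with care is the use of a \emph{vertex-disjoint} family of tour edges, since that is precisely what makes the $E_k$ measurable with respect to disjoint (hence independent) blocks of the points $P(\cdot)$ --- picking an arbitrary set of $n/2$ tour edges would fail, as consecutive tour edges share an endpoint and their D-edge events would be correlated. A purely cosmetic issue is the parity of $n$: for odd $n$ one uses $\lfloor n/2\rfloor$ disjoint edges, and the stated bound is to be read modulo this rounding, consistently with the companion theorem. Finally, the inequality (rather than equality) in the conclusion is intrinsic: ${\cal E}_A$ constrains all $n$ tour edges, whereas the proof only exploits a matching of size $n/2$.
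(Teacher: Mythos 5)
Your proof is correct and follows essentially the same route as the paper: restrict attention to the vertex-disjoint odd-indexed tour edges, note that the corresponding $A_1\rightarrow A_2$ events are independent with probability $\lambda^4/n$ each (Lemma~\ref{lem:pdedge}), and bound $\prob{{\cal E}_A}$ by the probability that none of these $n/2$ events occurs. Your explicit conditioning on $\pi$ only makes rigorous the independence step that the paper states directly, so there is nothing substantive to add.
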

\myproof{
	For each $i$ let us consider the event $D_i:=(P(\pi_i)\in A_1) \land (P(\pi_{i+1})\in A_2)$.
	By Lemma \ref{lem:pdedge}, it is $\prob{\lnot D_i}=1- \lambda^4/n$ for each $i$.
	Furthermore, it is ${\cal E}_A=\lnot D_1\land \lnot D_2 \land \cdots \land \lnot D_n$.
	
	Let us look at the odd-indexed edges of the tour, i.e., edges $(P(\pi_i),P(\pi_{i+1}))$ for $i=1,3,5,\ldots$. Since these edges are disjoint, the events $D_1,D_3,D_5,\ldots$ are independent.
	The probability that none of them occurs is 
	\[
	\prob{\lnot D_1 \land \lnot D_3 \land \cdots} = (1-\lambda^4/n)^{n/2}. 
	\]
	Since ${\cal E}_A \implies \lnot D_1 \land \lnot D_3 \land \cdots$, the result follows.
}

\begin{corollary}
	Under the Euclidean distributional setting, 
	consider  the random points $P(1),\ldots,P(n)$ and the random tour $\pi=(\pi_1,\ldots,\pi_n)$. Let ${\cal E}_{AB}$ be the event \myquote{$\pi$ does not contain any D-cross $\cross{A_1}{A_2}{B_1}{B_2}$}. Then
	\[
	\prob{{\cal E}_{AB}} \le 2(1-\lambda^4/n)^{n/2}.
	\]
\end{corollary}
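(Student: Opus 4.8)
The plan is to reduce the statement to Lemma \ref{cor:pnotdedge} by a symmetry argument followed by a union bound. First I would introduce the event $\mathcal{E}_B$ defined exactly as $\mathcal{E}_A$ in Lemma \ref{cor:pnotdedge}, but with the roles of the pair $A_1,A_2$ and the pair $B_1,B_2$ interchanged, i.e., ``$\pi$ does not traverse any $B_1\rightarrow B_2$ D-edge''. Since the four special squares $A_1,A_2,B_1,B_2$ all have the same area $\lambda^2/\sqrt{n}$ (cf. Lemma \ref{lem:pdedge}), and both the points $P(1),\ldots,P(n)$ and the tour permutation $\pi$ are drawn in an exchangeable way, the proof of Lemma \ref{cor:pnotdedge} applies verbatim after relabelling, yielding $\prob{\mathcal{E}_B}\le (1-\lambda^4/n)^{n/2}$.

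Next I would establish the set inclusion $\mathcal{E}_{AB}\subseteq \mathcal{E}_A\cup\mathcal{E}_B$. By definition a D-cross $\cross{A_1}{A_2}{B_1}{B_2}$ is simply a pair of tour edges, one of which realizes $A_1\rightarrow A_2$ and the other $B_1\rightarrow B_2$; no geometric crossing needs to be checked at this stage, since the definition of D-cross is purely combinatorial. Moreover, such a pair of edges is automatically vertex-disjoint (hence, in particular, non-consecutive along the tour), because the four squares $A_1,A_2,B_1,B_2$ are pairwise disjoint. Consequently, if $\pi$ traverses at least one $A_1\rightarrow A_2$ D-edge and at least one $B_1\rightarrow B_2$ D-edge, then choosing one of each produces a D-cross contained in the tour. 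Contrapositively, if $\pi$ contains no D-cross $\cross{A_1}{A_2}{B_1}{B_2}$ (i.e., $\mathcal{E}_{AB}$ holds), then $\pi$ must traverse no $A_1\rightarrow A_2$ D-edge or no $B_1\rightarrow B_2$ D-edge, which is precisely $\mathcal{E}_A\cup\mathcal{E}_B$.

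Finally, combining this inclusion with the union bound and the two estimates gives
\[
\prob{\mathcal{E}_{AB}} \le \prob{\mathcal{E}_A} + \prob{\mathcal{E}_B} \le 2\left(1-\frac{\lambda^4}{n}\right)^{n/2},
\]
which is the claim. There is no serious obstacle in this proof: it is essentially a one-line union bound on top of the previous lemma. The only point that deserves a moment's care is verifying that the two edges making up a D-cross are genuinely distinct (indeed vertex-disjoint and non-consecutive), so that the mere existence of one $A$-type and one $B$-type D-edge really does exhibit a legitimate D-cross; this is immediate from the disjointness of the four special squares.
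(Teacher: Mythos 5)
Your proof is correct and follows essentially the same route as the paper's: both define $\mathcal{E}_B$ by symmetry with $\mathcal{E}_A$, observe that the absence of a D-cross forces the absence of an $A_1\rightarrow A_2$ or a $B_1\rightarrow B_2$ D-edge, and conclude by a union bound together with Lemma \ref{cor:pnotdedge}. The only cosmetic difference is that the paper states the identity $\mathcal{E}_{AB}=\mathcal{E}_A\lor\mathcal{E}_B$ and drops the intersection term, whereas you use just the inclusion $\mathcal{E}_{AB}\subseteq\mathcal{E}_A\cup\mathcal{E}_B$, which is all that is needed.
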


\myproof{
	Let ${\cal E}_A$ be the event \myquote{$\pi$ does not traverse any $A_1\rightarrow A_2$ D-edge} and 
	${\cal E}_B$ be the event \myquote{$\pi$ does not traverse any  $B_1\rightarrow B_2$ D-edge}. 
	We have ${\cal E}_{AB} = {\cal E}_A \lor {\cal E}_B$ . Furthermore, $\prob{{\cal E}_A}=\prob{{\cal E}_B}$  and so
	\[
	\prob{{\cal E}_{AB}} = \prob{{\cal E}_A} + \prob{{\cal E}_B} - \prob{{\cal E}_A \cap {\cal E}_B} \le 2\, \prob{{\cal E}_A}.
	\]
	Then the conclusion follows from Lemma \ref{cor:pnotdedge}.
}

\begin{theorem}
	\label{cor:pgood}
	For each $p\in[0,1)$ there exist a value for $\lambda>0$
	and an integer $n_0$ such that, for each $n\ge n_0$ and  Euclidean
	random instance of size $n$, it is $\prob{\textrm{The instance is good }}  > p$.
	\label{teo:puncross}
\end{theorem}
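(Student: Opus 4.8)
The plan is to mirror the structure of the uniform case (Corollary~\ref{cor:pgoodU}) but using D-uncrossing moves in place of LS-moves. Recall that we have already shown in equation~(\ref{eq:valmov1}) that any D-uncrossing move is a good move, so it suffices to prove that, by choosing $\lambda$ small enough, the probability that the random tour contains at least one D-cross $\cross{A_1}{A_2}{B_1}{B_2}$ (traversed in the right order) can be made larger than $p$ for all sufficiently large $n$. The event ``the instance is not good'' is implied by the event $\mathcal{E}_{AB}$ that $\pi$ contains no such D-cross, so $\prob{\textrm{instance is good}} \ge 1 - \prob{\mathcal{E}_{AB}}$.

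First I would invoke the Corollary preceding this theorem, which gives $\prob{\mathcal{E}_{AB}} \le 2(1-\lambda^4/n)^{n/2}$. Taking the limit as $n\to\infty$, the standard fact $(1-\lambda^4/n)^{n/2}\to e^{-\lambda^4/2}$ yields
\[
\limsup_{n\to\infty}\prob{\mathcal{E}_{AB}} \le 2e^{-\lambda^4/2}.
\]
Next I would choose $\lambda>0$ large enough that $2e^{-\lambda^4/2} < 1-p$, i.e. $\lambda > \big(2\ln\frac{2}{1-p}\big)^{1/4}$. With this choice, $\limsup_{n\to\infty}\prob{\mathcal{E}_{AB}} < 1-p$, so there exists $n_0$ such that for all $n\ge n_0$ we have $\prob{\mathcal{E}_{AB}} < 1-p$, hence $\prob{\textrm{instance is good}} \ge 1-\prob{\mathcal{E}_{AB}} > p$.

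One subtlety to check is that $\lambda$ must not be so large that the special squares $A_1,A_2,B_1,B_2$ fail to fit in the unit square or that $\delta_n$ becomes nonpositive or the Lemma~\ref{lem:maxd1} hypothesis $1.055 < \delta_n \le \sqrt 2$ is violated; but since all of these are constraints only for fixed $n$ and the conclusion is asymptotic, for any fixed admissible $\lambda$ one simply takes $n_0$ large enough that $\delta_n = \sqrt2 - 5\sqrt2\,\lambda n^{-1/4}$ satisfies the needed inequalities and the grid has more than $2\times 2$ cells — this only pushes $n_0$ up. I do not expect any real obstacle here: the theorem is essentially a repackaging of the preceding Corollary plus the elementary limit, exactly parallel to how Corollary~\ref{cor:pgoodU} followed from its Theorem in the uniform setting. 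The only mild care needed is bookkeeping the constant and confirming the monotone/limit behavior of $(1-\lambda^4/n)^{n/2}$, which is routine.
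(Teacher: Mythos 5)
Your argument is exactly the paper's proof: invoke the corollary bounding $\prob{{\cal E}_{AB}}$ by $2(1-\lambda^4/n)^{n/2}$, pass to the limit $2e^{-\lambda^4/2}=2/\sqrt{e^{\lambda^4}}$, pick $\lambda > \sqrt[4]{2\ln\frac{2}{1-p}}$, and conclude for all $n\ge n_0$ since D-uncrossings are good moves. Only a cosmetic slip: in your opening sentence you say ``choosing $\lambda$ small enough,'' whereas your (correct) computation, like the paper's, requires $\lambda$ \emph{large} enough; the extra geometric bookkeeping you mention is harmless and indeed only pushes $n_0$ up.
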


\myproof{
	Let $\lambda$ be such that $2/\sqrt{e^{\lambda^4}} < 1-p$,  i.e.,  
	\[
	\lambda > \sqrt[4]{ 2\ln\left(\frac{2}{1-p}\right)}.
	\]
	It is
	\begin{align*}
		\lim_{n\rightarrow \infty} \prob{\textrm{there are D-uncrossings}} & \ge 1- \lim_{n\rightarrow \infty} 2(1-\lambda^4/n)^{n/2}  \\
		& = 1 - 2/\sqrt{e^{\lambda^4}} \\
		& > p
	\end{align*}
	and therefore, from some $n_0$ on, it is $\prob{\textrm{ there are D-uncrossings }} > p$.
	Since the D-uncrossings are good moves, the conclusion follows.
}

Finally, the following theorem bounds probabilistically the complexity of ${\cal A}_g$ on Euclidean instances via linear functions.
\begin{theorem}
	\label{teo:avgE}
	Consider the Euclidean distributional setting for random instances. Then, for each $p\in[0,1)$ there exists an algorithm \ALG($\delta_n$), 
	with  $\bar T_{\ALG(\delta_n)}(n)=\Theta(n)$, and an integer $n_0$ such that, for each $n\ge n_0$, it is
	\[
	\probd{}{T^n_{{\cal A}_g} \le T^n_{\ALG(\delta_n)}}  > p.
	\]
\end{theorem}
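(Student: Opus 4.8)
The plan is to reproduce the proof of Theorem \ref{teo:avgU} almost verbatim, substituting the Euclidean ingredients for the uniform ones. First I would invoke Theorem \ref{teo:puncross}: given $p\in[0,1)$ it furnishes a constant $\lambda>0$ and an integer $n_1$ such that, for every $n\ge n_1$, a random Euclidean instance of size $n$ is good with probability strictly larger than $p$. I would then fix the threshold
\begin{equation*}
\delta_n := \sqrt{2} - (5\sqrt{2})\,\lambda\, n^{-1/4},
\end{equation*}
which is exactly the threshold appearing in \eqref{eq:valmov1}; this is the crucial bookkeeping point, because it is with respect to this very $\delta_n$ that the D-uncrossing moves of Theorem \ref{teo:puncross} qualify as good moves. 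Since $\delta_n\to\sqrt{2}$, there is an integer $n_2$ beyond which $1.055<\delta_n\le\sqrt{2}$, so the hypothesis of Lemma \ref{lem:maxd1} is satisfied and Corollary \ref{cor:tlinear} applies, yielding $\bar T_{\ALG(\delta_n)}(n)=\Theta(n)$ — precisely the complexity claimed for the heuristic in the statement.

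Next I would perform the same conditioning argument used for the uniform case. Decomposing $\probd{}{T^n_{{\cal A}_g}\le T^n_{\ALG(\delta_n)}}$ over the complementary events ``the instance is good'' and ``the instance is not good'', and using Lemma \ref{lem:dominate} — which guarantees $t^n_{{\cal A}_g}(I)\le t^n_{\ALG(\delta_n)}(I)$ for every good instance $I$, hence $\probd{}{T^n_{{\cal A}_g}\le T^n_{\ALG(\delta_n)}\mid \textrm{instance is good}}=1$ — I would conclude that, for every $n\ge n_0:=\max\{n_1,n_2\}$,
\begin{equation*}
\probd{}{T^n_{{\cal A}_g}\le T^n_{\ALG(\delta_n)}}\ \ge\ \probd{}{\textrm{instance is good}}\ >\ p,
\end{equation*}
which is the desired inequality, with \ALG($\delta_n$) the algorithm just exhibited.

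I do not expect a genuine obstacle: all the substantive content has been pushed into Lemma \ref{lem:maxd1}, Corollary \ref{cor:tlinear}, Lemma \ref{lem:dominate} and Theorem \ref{teo:puncross}, and what remains is the same one-page conditioning step as in Theorem \ref{teo:avgU}. The only points requiring a line of care are (i) verifying that the chosen $\delta_n$ eventually falls in the interval $(1.055,\sqrt{2}]$ demanded by Lemma \ref{lem:maxd1}, which is immediate from $\delta_n\to\sqrt{2}$, and (ii) checking that the $\delta_n$ used to bound the running time of \ALG is the same one under which D-uncrossings are good moves, which holds by construction thanks to identity \eqref{eq:valmov1}. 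Everything else is identical to the uniform argument.
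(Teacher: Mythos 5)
Your proposal is correct and follows essentially the same route as the paper, whose proof of this theorem is literally stated as ``the exact same lines as the proof of Theorem \ref{teo:avgU}'': choose $\lambda$ via Theorem \ref{teo:puncross}, set $\delta_n=\sqrt{2}-(5\sqrt{2})\lambda n^{-1/4}$ so Corollary \ref{cor:tlinear} gives $\bar T_{\ALG(\delta_n)}(n)=\Theta(n)$, and conclude by the good/not-good conditioning argument with Lemma \ref{lem:dominate}. Your explicit check that $\delta_n$ eventually lies in the range required by Lemma \ref{lem:maxd1} is a small point of care the paper leaves implicit.
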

\myproof{
	The proof follows the exact same lines as the proof of Theorem \ref{teo:avgU}.
}

\section{Computational experiments and statistics}
\label{sec:results}

\subsection{Best move from a random tour} 

In this section we compare experimentally the complete enumeration (CE), the greedy algorithm (${\cal A}_g$), the blind algorithm (${\cal A}_b$), and  the algorithm $\ALG(\delta_n)$ by looking at how many moves they evaluate on average over 1000 runs. In particular, 
we generate 100 random instances and for each of them we generate 10 random tours on which we determine the best 2-OPT move.
Given that the bookkeeping costs are dominated by the number of evaluations, looking at the number of moves which are evaluated gives a pretty precise idea of the ratios between the running times as well.

\paragraph{Uniform instances.}

	\begin{table}[h]
	\begin{center}
	\begin{tabular}[h]{|r|r|r|r|r|r|c|r|r|c}
			\hline
			\rule[-2.5mm]{0mm}{0.65cm}
			$n$ &  CE & ${\cal A}_g$ & ${\cal A}_b$ & \ALG($\delta_n$) & $\bar f(n)$ & $\frac{CE}{{\cal A}_g}$ & $\frac{{\cal A}_b}{{\cal A}_g}$\\
			\hline
			2,000 &  1,999,000 &   106,462 &   133,049 &   170,190 &   169,447 & 18.78 & 1.25\\
			4,000 &  7,998,000 &   304,987 &   377,135 &   481,079 &   479,629 & 26.22 & 1.24\\
			6,000 & 17,997,000 &   560,647 &   682,285 &   881,361 &   881,356 & 32.10 & 1.22  \\
			8,000 & 31,996,000 &   871,001 &  1,057,508 &  1,355,070 &  1,357,106 & 36.73 & 1.21\\
			10,000 & 49,995,000 &  1,201,409 &  1,468,602 &  1,886,631 &  1,896,756 & 41.61 & 1.22 \\
			12,000 & 71,994,000 &  1,567,947 &  1,918,657 &  2,497,627 &  2,493,476 & 45.92 & 1.22\\
			14,000 & 97,993,000 &  1,986,524 &  2,425,136 &  3,137,567 &  3,142,251 & 49.33 & 1.22 \\
			16,000 & 127,992,000 &  2,453,347 &  2,981,419 &  3,857,294 &  3,839,197 & 52.17 & 1.22\\
			18,000 & 161,991,000 &  2,910,420 &  3,525,385 &  4,576,299 &  4,581,189 & 55.66 & 1.21 \\
			20,000 & 199,990,000 &  3,368,334 &  4,111,101 &  5,345,312 &  5,365,643 & 59.37 & 1.22 \\
			22,000 & 241,989,000 &  3,963,375 &  4,779,938 &  6,206,599 &  6,190,371 & 61.06 & 1.21\\
			24,000 & 287,988,000 &  4,486,287 &  5,465,894 &  7,052,514 &  7,053,497 & 64.19 & 1.22\\
			\hline
		\end{tabular}
		\end{center}
\caption{Average number of moves evaluated for finding the best move on a random tour. Results for uniform instances.}
	\label{tab:1moveUni}
	\end{table}

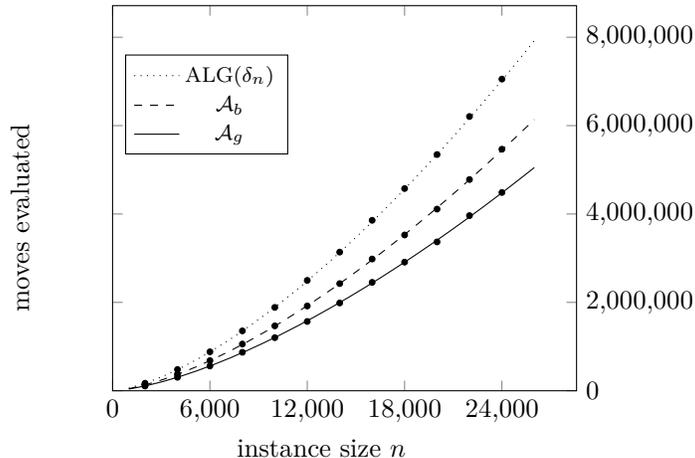
\begin{figure}[tbp]
	\caption{Plots of possible fittings of the average time complexity for finding the  best move on a random tour (uniform instances). \ALG($\delta_n$),  fit $\alpha n^{\frac{3}{2}}= 1.89\, n^{\frac{3}{2}}$. ${\cal A}_b$, fit is $1.46\, n^{\frac{3}{2}}$. ${\cal A}_g$,fit is $1.20\, n^{\frac{3}{2}}$. 
		Each dot corresponds to the average over 1000 trials of size $n$. \comment{altri possibili fit: \color{blue} cyan, $0.675\, n^{\frac{3}{2}}\, \log n^{\frac{1}{4}}$; blue, $0.119\, n^{\frac{3}{2}}\, \log n$} 
	}
	\label{fig:plotUni}
	{\begin{center}
			\begin{tikzpicture}[scale=0.9]
				\pgfplotsset{scaled x ticks=false}
				\pgfplotsset{set layers}
				\begin{axis}[ 
					scaled y ticks = false, ymin=0, xmin=0,
					xlabel=instance size $n$,
					xtick={0,6000,12000,18000,24000},
					ylabel=moves evaluated,
					yticklabel pos=right,
					yticklabel style = { /pgf/number format/fixed },
					legend style={font=\footnotesize, 
						legend pos =north west},				]
					
					\addplot[ domain=1000:26000, color=black, dotted ] {(1.89*x^1.5)};
					\addlegendentry{\ALG($\delta_n$)}	
					\addplot[ domain=1000:26000, color=black,dashed ] {(1.464*x^1.5)};
					\addlegendentry{${\cal A}_b$}	
					\addplot[ domain=1000:26000, color=black ] {(1.205*x^1.5)};
					\addlegendentry{${\cal A}_g$}	
					\addplot[ color=black, only marks, /tikz/mark size={1.2pt} ] coordinates{ 
						( 2000,   106462 )
						(4000,  304987 )
						(6000,  560647)
						(8000,   871001) 
						(10000,  1201409)
						(12000 , 1567947)
						(14000, 1986524)
						(16000,  2453347)
						(18000,  2910420) 
						(20000, 3368334)
						(22000, 3963375)
						(24000, 4486287)
					};	
					\addplot[ color=black, only marks, /tikz/mark size={1.2pt} ] coordinates{
						( 2000,  133049)
						(4000,   377135)
						(6000,  682285)
						(8000,  1057508) 
						(10000, 1468602)
						(12000,  1918657)
						(14000, 2425136)
						(16000, 2981419)
						(18000,  3525385) 
						(20000, 4111101) 
						(22000, 4779938)
						(24000,5465894)
					};

					\addplot[ color=black, only marks, /tikz/mark size={1.2pt} ] coordinates{
						( 2000,  170190)
						(4000,   481079) 
						(6000,   881361)
						(8000, 1355070)
						(10000,  1886631)
						(12000,  2497627)
						(14000,  3137567)
						(16000,  3857294)
						(18000 ,  4576299)
						(20000, 5345312)
						(22000,  6206599)
						(24000, 7052514)
					};			

				\end{axis}l
			\end{tikzpicture}
			
		\end{center}
	}
\end{figure}

In Table \ref{tab:1moveUni} we report 
the average number of moves (rounded to integer) evaluated by
CE, ${\cal A}_g$, ${\cal A}_b$, and   $\ALG(\delta_n)$.
In the experiment,  we set $\delta_n:=1 - 1.89/\sqrt{n}$, where  $\alpha\simeq 1.9$ was chosen, after a little 
tuning, since it is a value large enough to guarantee a good probability of no errors. Indeed, out of 12,000 instances considered, the algorithm \ALG($\delta_n$) {\em always} found the best move.

Column $\bar f(n)$ reports the theoretical expected complexity of \ALG($\delta_n$), i.e., $\alpha(n-3)n^{1/2}$. It can be seen how this column is very close to the actual averages observed in the experiments.
Column $\frac{CE}{{\cal A}_g}$ gives the ratio between the number of moves of complete enumeration and of the greedy algorithm, showing that we can achieve speed-ups of up to $60\times$ over instances of size $\le 24000$. The table also reports the ratios between the number of moves of the blind  and of the greedy algorithm (column $\frac{{\cal A}_b}{{\cal A}_g}$). Since this ratio stays pretty much constant (around 1.22) it appears from the table that ${\cal A}_g$ and ${\cal A}_b$ both evaluate a sub-quadratic number of moves of the same asymptotic growth but with a smaller constant for ${\cal A}_g$. 
In Figure \ref{fig:plotUni} we have plotted the same values and we have fitted the dots with functions $\Theta(n^{3/2})$, namely $\alpha n^{3/2}$ for \ALG($\delta_n$), and (estimated by power regression)  
$1.20\, n^{\frac{3}{2}}$ for the greedy algorithm and $1.46\, n^{\frac{3}{2}}$ 
for the blind algorithm.

\paragraph{Euclidean and TSPLIB geometric instances.}

We have performed a similar set of experiments on random Euclidean instances. In Table \ref{tab:1moveEuc} we can see that the greedy
algorithm is from {\em two to three orders of magnitude}  faster than
complete enumeration when looking for the best move on a random tour on graphs with up to 24,000 nodes.
The values are averages over 1,000 experiments for each size $n$, 
exactly as before. The blind algorithm exhibits a similar time complexity,
but it is roughly 2.5 times slower than greedy. 
The fixed threshold algorithm has been run with
$\delta_n= \sqrt{2} - 2.5/\sqrt[4]{n}$, corresponding to
$\lambda=1/(2\sqrt2)$. 
We remark that the algorithm \ALG($\delta_n)$ found the optimal move, over {\em all }the 12,000 trials.
In Figure \ref{fig:plotEuc} we can see the data of Table \ref{tab:1moveEuc} plotted in a graph. The linear behavior of ${\cal A}_g$ and ${\cal A}_b$ can be appreciated by looking at the interpolating functions, respectively, $7.7\,n$ and $19.3\,n$.

\begin{table}[h]
\begin{center}
		\begin{tabular}[h]{|r|r|r|r|r|r|c|r|r|c}
			\hline
			\rule[-2.5mm]{0mm}{0.65cm}
			$n$ &  CE & ${\cal A}_g$ & ${\cal A}_b$ & \ALG($\delta_n$) & $\bar f(n)$ & $\frac{CE}{{\cal A}_g}$ & $\frac{{\cal A}_b}{{\cal A}_g}$\\
			\hline
			2,000 &  1,999,000 &    15,786 &    34,874 &    63,626 &    63,100 & 126.63 & 2.21\\
			4,000 &  7,998,000 &    32,811 &    73,709 &   119,614 &   119,846 & 243.76 & 2.25\\
			6,000 & 17,997,000 &    46,710 &   110,513 &   173,905 &   176,063 & 385.29 & 2.37 \\
			8,000 & 31,996,000 &    61,073 &   149,504 &   232,322 &   231,907 & 523.90 & 2.45\\
			10,000 & 49,995,000 &    78,926 &   189,164 &   285,921 &   287,487 & 633.44 & 2.40 \\
			12,000 & 71,994,000 &    93,552 &   227,500 &   340,711 &   342,869 & 769.56 & 2.43\\
			14,000 & 97,993,000 &   110,450 &   269,611 &   397,165 &   398,093 & 887.21 & 2.44 \\
			16,000 & 127,992,000 &   124,632 &   308,551 &   457,667 &   453,189 & 1026.95 & 2.48\\
			18,000 & 161,991,000 &   141,852 &   350,996 &   513,313 &   508,177 & 1141.97 & 2.47 \\
			20,000 & 199,990,000 &   156,056 &   386,363 &   569,531 &   563,072 & 1281.52 & 2.48\\
			22,000 & 241,989,000 &   169,574 &   423,487 &   618,985 &   617,886 & 1427.03 & 2.50\\
			24,000 & 287,988,000 &   181,513 &   464,918 &   675,163 &   672,629 & 1586.59 & 2.56\\
			\hline
		\end{tabular}
\end{center}		
\caption{Average number of moves evaluated for finding the best move on a random tour. Results for  Euclidean instances.}
		\label{tab:1moveEuc}
\end{table}

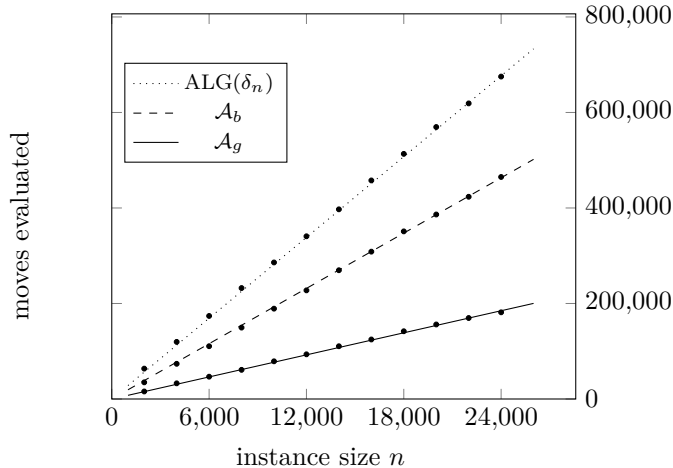
\begin{figure}[tbp]
	\caption{Plots of possible fittings of the average time complexity for finding the  best move on a random tour (Euclidean instances). \ALG($\delta_n$):  fit is $28.2\, n$. ${\cal A}_b$: fit is $19.3\, n$. ${\cal A}_g$: fit is $7.7\, n$. 
		Each dot corresponds to the average over 1000 trials of size $n$.
	}
	\label{fig:plotEuc} 	
	{\begin{center}
			\begin{tikzpicture}[scale=0.9]
				\pgfplotsset{scaled x ticks=false}
				\pgfplotsset{set layers}
				\begin{axis}[ 
					scaled y ticks = false, ymin=0, xmin=0,
					xlabel=instance size $n$,
					xtick={0,6000,12000,18000,24000},
					ylabel=moves evaluated,
					yticklabel pos=right,
					yticklabel style = { /pgf/number format/fixed },
					legend style={font=\footnotesize, 
						legend pos =north west},
					]
					\addplot[ domain=1000:26000, color=black, dotted ] {(28.2*x)};
					\addlegendentry{\ALG($\delta_n$)}	
					\addplot[ domain=1000:26000, color=black,dashed ] {(19.3*x)};
					\addlegendentry{${\cal A}_b$}	
					\addplot[ domain=1000:26000, color=black ] {(7.7*x)};
					\addlegendentry{${\cal A}_g$}
					\addplot[ color=black, only marks, /tikz/mark size={1pt} ] coordinates{
						(2000,    63626) 
						(4000,  119614 )
						(6000 ,   173905 )
						(8000,   232322 )
						(10000,  285921)
						(12000,  340711) 
						(14000 ,  397165 )
						(16000,   457667)
						(18000,   513313)
						(20000, 569531)
						(22000,   618985) 
						(24000, 675163) 
					};
					\addplot[ color=black, only marks, /tikz/mark size={1pt} ] coordinates{
						( 2000,   15786)
						(4000,   32811)
						(6000,   46710 )
						(8000,   61073)
						(10000,    78926) 
						(12000,    93552)
						(14000,  110450)
						(16000,  124632 )
						(18000,   141852 )
						(20000,   156056 )
						(22000,  169574 )
						(24000,  181513) 
					};
					\newcommand\xM{30000}
					\newcommand\eXP{1.333}
					\newcommand\eXB{1.5}
					\addplot[ color=black, only marks, /tikz/mark size={1pt} ] coordinates{
						( 2000,   34874) 
						(4000,   73709) 
						(6000, 110513 )
						(8000,  149504 )
						(10000,   189164)(
						12000,  227500)
						(14000,  269611)
						(16000,   308551) 
						(18000,  350996 )
						(20000,  386363 )
						(22000, 423487) 
						(24000,  464918)
					};
					
				\end{axis}
			\end{tikzpicture}
			
		\end{center}
	}
\end{figure}

Some test-bed instances on the repository TSPLIB \mycite{tsplib} are of geometric nature, and we have tested our algorithm on those as well. In particular, there are some Euclidean instances (but they are not random, they correspond to some networks of world cities), and other are metric, not Euclidean, instances. We have selected the largest such instances (with the exception of {\tt pla85900}, that, with $\ge 85,900$ nodes, was too big for our computer setting).
The results are reported in Table \ref{tab:TSPLIB1}.
It can be seen that our method achieves a speed-up of two to three orders of magnitude in finding the best move on a random tour.

\begin{table}[h]
\begin{center}
\begin{tabular}[h]{|l|r|r|r|r|c|r|r|c}
			\hline
			\rule[-2.5mm]{0mm}{0.65cm}
			name &	$n$ &  ${\cal A}_g$ & ${\cal A}_b$ & CE & $\frac{CE}{{\cal A}_g}$ & $\frac{{\cal A}_b}{{\cal A}_g}$\\
			\hline			
			{\tt euc2d/rl5915} & 5,915 & 59,258 & 113,747 & 17,490,655 & 295.16 & 1.92 \\
			{\tt euc2d/rl5934} & 5,934 & 51,261 & 101,539 & 17,603,211 & 343.40 & 1.98 \\
			{\tt ceil2d/pla7397} & 7,397 &  48,665 &  127,695 & 27,354,106 & 562.08 & 2.62 \\
			{\tt euc2d/rl11849} &  11,849 & 98,457  & 223,416  & 70,193,476 & 712.93 & 2.27 \\
			{\tt euc2d/usa13509} & 13,509 & 104,147  & 232,979  & 91,239,786 & 876.06 & 2.24 \\
			{\tt euc2d/brd14051} & 14,051 & 170,286  & 345,863  & 98,708,275 & 579.66 & 2.03 \\
			{\tt euc2d/d15112} & 15,112  & 195,385  & 375,611  & 114,178,716 & 584.37 & 1.92 \\
			{\tt euc2d/d18512} & 18,512  &  174,374 & 409,420  &  171,337,816 & 982.58 & 2.34  \\
			{\tt ceil2d/pla33810}& 33,810 & 371,561 &  832,667 & 571,541,145  & 1538.21  & 2.24  \\
			\hline
	\end{tabular}
\end{center}
\caption{Finding the best move on a random tour on TSPLIB instances.}
\label{tab:TSPLIB1}	
\end{table}

\subsection{Convergence to a local optimum}

Given that  finding the best move at the beginning of
the local search is much faster with
our algorithm than with the standard approach, we
were optimistic about the fact that the time for the whole convergence would have been much shorter as well. Unfortunately, this is not
the case, since the effectiveness of our approach decreases along the
path to the local optimum. Indeed, at some point, our procedure can in fact become 
slower than complete enumeration given that, although they are both quadratic, the multiplicative constant factor for the $n^2$ term is  worse in our procedure  (for some minor technical details, see Appendix B). 


The slow-down phenomenon can be explained as follows. When we approach the
local optimum, the value of the best move decreases dramatically. While at the beginning of the convergence there are many improving moves, and some of them have a really big value, near the local optimum most moves are worsening, and the few improving moves have a small value. 
At this point, also the
thresholds which determine which pivots are \good become pretty low and hence almost all nodes are \good\ and get expanded. 

To counter the effect that our method could become slower than complete enumeration, we propose a very elementary switch-condition to perform a full local search, namely: {\em start the search looking for the best move with our algorithm, but at each iteration count how many moves are evaluated. If at any step the number of moves evaluated is ``too large'' (in a way that we will define soon),  then switch to 
	complete enumeration for the rest of the convergence}.

\pgfplotsset{scaled y ticks=false}
\begin{figure}[ht]
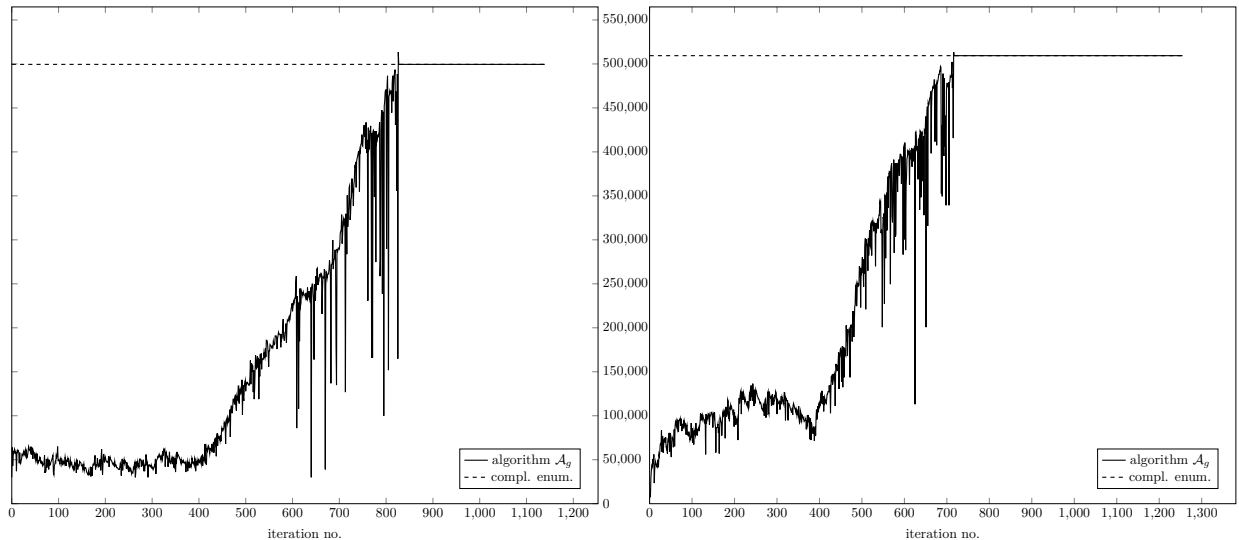

	\caption{Number of moves evaluated by CE and the greedy algorithm in the convergence to a local optimum on a random graph of $n=1000$ nodes. $x$-axis is iteration number, $y$-axis is number of moves evaluated to find the best. (Left:) Uniform  instance. (Right:) Euclidean  instance.}
	\label{fig:convergenceM}
	\begin{center}

	\end{center}
\end{figure}

A first, natural threshold for 
this switch could be simply to switch as soon as 
our algorithm evaluates as many moves as complete enumeration. For a graph of 1,000 nodes, we have 
observed empirically that this happens more or less at two thirds of the convergence. For example, in Figure \ref{fig:convergenceM} we plot the number of moves
evaluated by ${\cal A}_g$ in a local search convergence on a random graph, which took almost 1,200 moves on a uniform instance (left) and
almost 1,300 moves on a Euclidean instance (right).
For the uniform instance,  it can
be seen that until move 400 the algorithm takes more or less always the same time to find the best move, evaluating around 50,000 candidates, i.e., ten times less than complete enumeration.
From move 400 to around 800 the algorithm starts to take longer and longer
to find the best move, although, occasionally, the move is still found
by examining much fewer candidates than complete enumeration. 
In the last third of the convergence, the number of candidates 
examined by the greedy algorithm is comparable to, or exceeds, that of
complete enumeration, and so a switch is performed.

In Figure \ref{fig:convergenceM} (right) we see the same type of plot for
the convergence on a random Euclidean instance. Although finding the starting move takes fewer evaluations than for the uniform instance (i.e., 5,994 vs 22,977), soon finding the best move on the Euclidean instance becomes
somewhat harder than for the uniform one. At any rate, also in this case, until step 400 there is no significant increase in the work needed for finding the best move. At this point the trend becomes a steady increase until about move 800, when we switch to complete enumeration.

\pgfplotsset{scaled y ticks=false}
\begin{figure}[t]
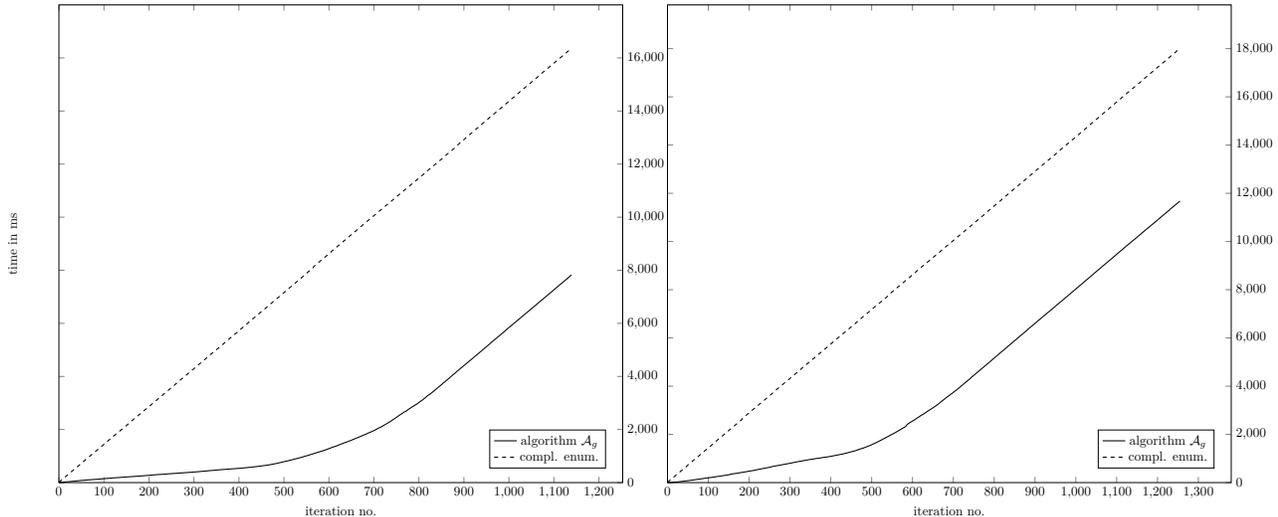

	\caption{Time (ms) taken by CE and the greedy algorithm on a convergence to a local optimum on a random graph of $n=1000$ nodes. Switch made when greedy evaluates as many moves as CE.  Left: Uniform  instance. Right: Euclidean  instance.}
	\label{fig:convergenceT}
	
	\begin{center}
		\hspace{-0.74cm}

	\end{center}
\end{figure}

As far as the running times for these two instances, in Figure \ref{fig:convergenceT} we can see how the 
greedy algorithm is initially much faster than complete
enumeration, but then its performance starts to deteriorate and eventually there is the switch. From that point on, the time complexity becomes that of
complete enumeration. Overall, there is an improvement of about 50\% for the time needed by the uniform instance and of 33\% for the
Euclidean instance.

\paragraph{Determining the best switch-point.}

The first, simple,
idea for switching from ${\cal A}_g$ to brute force was that of counting the number $c$ of moves that are evaluated
at each iteration, and switch  as soon as they are more than those evaluated by complete enumeration, i.e., $c \ge \frac{1}{2} n (n-1)$. In this paragraph we address the problem of determining if this strategy is 
indeed a good one, or if switching at some other moment could have been better. 

Notice that, since with our method there is an overhead in 
computing each move, the time to evaluate $k$ moves with our method is larger than evaluating $k$ moves with CE and so to account for this we might switch to CE when  $c \ge \beta n (n-1)$ even if  $\beta < 1/2$.
To find the best $\beta$, we have set up the following experiment. For a fixed $n$:
\begin{enumerate}
	\item We generated a random instance and a random starting permutation $\pi^0$.
	\item We ran local search with our algorithm up to the point  when our algorithm evaluates 
	$\frac{1}{2} n (n-1)$ moves. 
	Let this  iteration be $s$. This is the point when we would have 
	reverted to CE anyway, and so all our savings can only be made 
	in the iterations $1,\ldots,s$. 
	\item We ran local search with CE, starting from the same
	permutation $\pi^0$ as before up to the iteration 
	$s$. Let $t_s$ be the time taken by CE to get to iteration
	$s$. 
	\item We divided the interval $(0,1/2)$ in 5 parts, and considered setting $\beta$ to $h$ fifths of the interval. Since we quickly observed that the smallest values of $\beta$ yielded the worst results,  we limited the search of the best $\beta$ to $h=3,4,5$. For each
	such $\beta := h/10$, we ran our algorithm 
	to determine the time $t(\beta)$ that it takes to arrive
	at iteration $s$ if the switch is done as soon as
	$\beta n (n-1)$ moves are evaluated,
	as well as the total completion time $t_{tot}$. 
	
\end{enumerate}
We have repeated the above experiment 5 times for each $n=2000,4000,6000,8000$ and we obtain the following tables (Table \ref{tab:beta2000} -- \ref{tab:beta8000}).
The tables should be read as follows. Columns 
\myquote{$t_{100}$} 
(time in seconds for the first 100 steps of the convergence) and \myquote{$\textrm{impr}_{100}$} 
(improvement over CE in the first 100 steps)
are included to show the big gain of our method with respect to complete enumeration in the first stages of the convergence. Column 
\myquote{$t_{tot}$}
is the total time of the convergence and 
column 
\myquote{$\textrm{impr}_{tot}$} 
is	the overall savings over CE.
Column \myquote{$L$} is the total number of LS iterations (i.e., the length of the convergence). 
Column \myquote{$s$} is the iteration no. at which the greedy algorithm switches to CE.
\myquote{$t(\beta)$} is the time taken by ${\cal A}_g$ up to the switch iteration $s$, while \myquote{$t_s$} is the time taken by Complete Enumeration up to iteration $s$.
Finally, \myquote{avgMpI} is the average number of moves evaluated per iteration.

Each table is divided into two halves, where the first half refers to the algorithm
as we have presented it, while the second half is relative to a minor variant (discussed in Appendix B) that yields a very small improvement, on some instances. 
In the tables we have highlighted in boldface the setting yielding the best running time for each instance. From the experiments, it appears that the best setting, for instances of the size considered, would be to switch when $\beta=4/10$, corresponding to when the greedy algorithm evaluates 80\% of all possible moves. This setting is the winner for 11 out of 20 cases, for the basic greedy version, and 10 out of 20 cases for the variant we mentioned.  The overall savings are around 50\% of the time of complete enumeration. Setting $\beta=5/10$ is the best option in 6 out of 20 cases, while $\beta=3/10$ is definitely the worst option.

\begin{table}
\begin{center}
\begin{tabular}{|lccccccccr|}
			\hline
			algo & $t_{100}$ & $\textrm{impr}_{100}$ & $T_{tot}$ & $\textrm{impr}_{tot}$ & $L$ & $s$ & $t(\beta)$ & $t_s$ & avgMpI\\
			\hline
			CE & 1.321s &   0.0 & 31.38s &   0.0 &   2421 & - & - & - &  1999000\\
			$\beta=3/10$ & 0.178s & 86.6\% &  18.70s & 40.4\% &   2421 & 1411 & 5.43s & 18.28s &  1026947\\
			$\beta=4/10$ & 0.177s & 86.6\% & 19.63s & 37.4\% &   2421 & 1568 & 7.80s & 20.34s &   982218\\
			$\beta=5/10$ & 0.143s & 89.2\% & {\bf 18.64s} & {\bf 40.6\%} &   2421 & 1765 & 10.07s & 22.90s &   953043\\
			\hline
			CE & 1.301s &   0.0 & 32.47s &   0.0 &   2432 & - & - & - &  1999000\\
			$\beta=3/10$ & 0.149s & 88.6\% & {\bf 18.08s} & {\bf 44.3\%} &   2432 & 1463 & 5.49s & 19.02s &  1020521\\
			$\beta=4/10$ & 0.153s & 88.3\% & 18.18s & 44.0\% &   2432 & 1653 & 8.09s & 21.46s &   970851\\
			$\beta=5/10$ & 0.175s & 86.5\% & 18.93s & 41.7\% &   2432 & 1771 & 10.18s & 23.02s &   955734\\
			\hline
			CE & 1.379s &   0.0 & 33.34s &   0.0 &   2438 & - & - & - &  1999000\\
			$\beta=3/10$ & 0.157s & 88.6\% & {\bf 18.23s }& {\bf 45.3\%} &   2438 & 1423 & 4.99s & 19.03s &  1034536\\
			$\beta=4/10$ & 0.169s & 87.7\% & 18.46s & 44.6\% &   2438 & 1630 & 7.74s & 21.82s &   977569\\
			$\beta=5/10$ & 0.232s & 83.2\% & 20.01s & 40.0\% &   2438 & 1778 & 10.28s & 23.95s &   958197\\
			\hline
			CE & 1.270s &   0.0 & 31.81s &   0.0 &   2463 & - & - & - &  1999000\\
			$\beta=3/10$ & 0.134s & 89.4\% & {\bf 18.31s} & {\bf 42.4\% }&   2463 & 1421 & 4.92s & 18.27s &  1044769\\
			$\beta=4/10$ & 0.148s & 88.3\% & 18.57s & 41.6\% &   2463 & 1566 & 6.93s & 20.07s &  1005130\\
			$\beta=5/10$ & 0.152s & 88.0\% & 19.30s & 39.3\% &   2463 & 1745 & 9.96s & 22.40s &   982087\\
			\hline
			CE & 1.298s &   0.0 & 31.30s &   0.0 &   2409 & - & - & - &  1999000\\
			$\beta=3/10$ & 0.176s & 86.5\% & 19.13s & 38.9\% &   2409 & 1457 & 5.36s & 18.80s &  1003174\\
			$\beta=4/10$ & 0.163s & 87.5\% & {\bf 17.76s} & {\bf 43.3\% } &   2409 & 1628 & 7.47s & 21.06s &   952764\\
			$\beta=5/10$ & 0.176s & 86.5\% & 18.50s & 40.9\% &   2409 & 1792 & 10.40s & 23.16s &   931831\\
			\hline
			\hline				
			\hline			
			CE & 1.267s &   0.0 & 31.02s &   0.0 &   2421 & - & - & - &  1999000\\
			SF5-3/5 & 0.164s & 87.1\% & {\bf 17.63s} & {\bf 43.2\%} &   2421 & 1438 & 4.91s & 18.33s &  1012032\\
			SF5-4/5 & 0.150s & 88.2\% & 17.78s & 42.7\% &   2421 & 1616 & 7.06s & 20.59s &   961722\\
			SF5-5/5 & 0.270s & 78.7\% & 18.64s & 39.9\% &   2421 & 1830 & 11.08s & 23.38s &   930478\\
			\hline
			CE & 1.279s &   0.0 & 34.94s &   0.0 &   2432 & - & - & - &  1999000\\
			SF5-3/5 & 0.156s & 87.8\% & {\bf 17.63s} & {\bf 49.5\%} &   2432 & 1493 & 5.65s & 20.76s &  1003384\\
			SF5-4/5 & 0.152s & 88.1\% & 17.83s & 49.0\% &   2432 & 1670 & 8.03s & 23.50s &   955736\\
			SF5-5/5 & 0.155s & 87.9\% & 18.40s & 47.3\% &   2432 & 1813 & 10.47s & 26.10s &   936685\\
			\hline
			CE & 1.275s &   0.0 & 31.25s &   0.0 &   2438 & - & - & - &  1999000\\
			SF5-3/5 & 0.164s & 87.1\% & {\bf 17.72s} & {\bf 43.3\%} &   2438 & 1450 & 5.09s & 18.52s &  1018747\\
			SF5-4/5 & 0.161s & 87.4\% & 17.86s & 42.8\% &   2438 & 1667 & 7.96s & 21.26s &   958047\\
			SF5-5/5 & 0.159s & 87.5\% & 18.61s & 40.4\% &   2438 & 1838 & 10.89s & 23.47s &   936531\\
			\hline
			CE & 1.278s &   0.0 & 32.00s &   0.0 &   2463 & - & - & - &  1999000\\
			SF5-3/5 & 0.138s & 89.2\% & {\bf 18.09s} & {\bf 43.5\%} &   2463 & 1440 & 4.95s & 18.78s &  1031720\\
			SF5-4/5 & 0.139s & 89.1\% & 18.16s & 43.2\% &   2463 & 1624 & 7.44s & 21.14s &   982368\\
			SF5-5/5 & 0.145s & 88.6\% & 19.55s & 38.9\% &   2463 & 1758 & 9.99s & 22.88s &   963758\\
			\hline
			CE & 1.278s &   0.0 & 30.81s &   0.0 &   2409 & - & - & - &  1999000\\
			SF5-3/5 & 0.249s & 80.5\% & 21.51s & 30.2\% &   2409 & 1516 & 7.21s & 19.36s &   975039\\
			SF5-4/5 & 0.174s & 86.4\% & {\bf 17.83s} & {\bf 42.1\%} &   2409 & 1652 & 7.84s & 21.10s &   936786\\
			SF5-5/5 & 0.160s & 87.5\% & 18.01s & 41.5\% &   2409 & 1852 & 10.83s & 23.63s &   910122\\
			\hline
	\end{tabular}
\end{center}
\caption{Runs to determine the best $\beta$. Above: ${\cal A}_g$. Below: stronger ${\cal A}_g$. Five instances, $n=2000$.}
\label{tab:beta2000}
\end{table}

\begin{table}
\begin{center}
\begin{tabular}{|lccccccccr|}
			\hline
			algo & $t_{100}$ & $\textrm{impr}_{100}$ & $T_{tot}$ & $\textrm{impr}_{tot}$ & $L$ & $s$ & $t(\beta)$ & $t_s$ & avgMpI\\
			\hline
			CE & 7.406s &   0.0 & 375.68s &   0.0 &   4986 & - & - & - &  7998000\\
			$\beta=3/10$ & 0.677s & 90.9\% & 199.56s & 46.9\% &   4986 & 3069 & 56.44s & 229.64s &  3938640\\
			$\beta=4/10$ & 0.663s & 91.0\% & {\bf 197.68s} &{\bf 47.4\%} &   4986 & 3509 & 86.25s & 263.10s &  3704082\\
			$\beta=5/10$ & 0.653s & 91.2\% & 198.04s & 47.3\% &   4986 & 3821 & 108.90s & 286.97s &  3614730\\
			\hline
			CE & 7.428s &   0.0 & 378.77s &   0.0 &   5031 & - & - & - &  7998000\\
			$\beta=3/10$ & 0.690s & 90.7\% & 206.21s & 45.6\% &   5031 & 3023 & 54.66s & 227.25s &  4012509\\
			$\beta=4/10$ & 0.616s & 91.7\% & {\bf 199.83s} & {\bf 47.2\%} &   5031 & 3439 & 79.72s & 258.22s &  3786022\\
			$\beta=5/10$ & 0.634s & 91.5\% & 202.53s & 46.5\% &   5031 & 3692 & 102.32s & 277.05s &  3727908\\
			\hline
			CE & 7.444s &   0.0 & 373.32s &   0.0 &   4977 & - & - & - &  7998000\\
			$\beta=3/10$ & 0.661s & 91.1\% & 198.35s & 46.9\% &   4977 & 3077 & 55.02s & 230.38s &  3919739\\
			$\beta=4/10$ & 0.648s & 91.3\% & {\bf 195.83s} & {\bf 47.5\%} &   4977 & 3377 & 75.92s & 252.77s &  3760499\\
			$\beta=5/10$ & 0.646s & 91.3\% & 199.62s & 46.5\% &   4977 & 3681 & 101.06s & 275.37s &  3677210\\
			\hline
			CE & 7.594s &   0.0 & 379.57s &   0.0 &   5062 & - & - & - &  7998000\\
			$\beta=3/10$ & 0.641s & 91.6\% & 197.88s & 47.9\% &   5062 & 3023 & 49.27s & 225.96s &  4029621\\
			$\beta=4/10$ & 0.643s & 91.5\% & {\bf 194.83s }& {\bf 48.7\%} &   5062 & 3353 & 70.62s & 250.52s &  3855573\\
			$\beta=5/10$ & 0.619s & 91.8\% & 200.40s & 47.2\% &   5062 & 3748 & 101.70s & 281.25s &  3750503\\
			\hline
			CE & 7.156s &   0.0 & 365.01s &   0.0 &   5028 & - & - & - &  7998000\\
			$\beta=3/10$ & 0.591s & 91.7\% & 203.34s & 44.3\% &   5028 & 3027 & 52.67s & 218.98s &  3993228\\
			$\beta=4/10$ & 0.599s & 91.6\% & 200.59s & 45.0\% &   5028 & 3413 & 78.90s & 247.73s &  3789793\\
			$\beta=5/10$ & 0.606s & 91.5\% & {\bf 198.76s} & {\bf 45.5\%} &   5028 & 3688 & 100.55s & 267.98s &  3716365\\
			\hline
			\hline				
			\hline						
			CE & 7.805s &   0.0 & 390.01s &   0.0 &   4986 & - & - & - &  7998000\\
			SF5-3/5 & 0.666s & 91.5\% & 206.84s & 47.0\% &   4986 & 3094 & 58.16s & 240.01s &  3903740\\
			SF5-4/5 & 0.664s & 91.5\% & {\bf 204.76s} & {\bf 47.5\%} &   4986 & 3556 & 91.27s & 275.95s &  3649850\\
			SF5-5/5 & 0.708s & 90.9\% & 205.94s & 47.2\% &   4986 & 3883 & 117.66s & 301.67s &  3554548\\
			\hline
			CE & 8.202s &   0.0 & 394.16s &   0.0 &   5031 & - & - & - &  7998000\\
			SF5-3/5 & 0.634s & 92.3\% & 212.59s & 46.1\% &   5031 & 3061 & 56.73s & 237.46s &  3966810\\
			SF5-4/5 & 0.643s & 92.2\% & {\bf 208.34s} & {\bf 47.1\%} &   5031 & 3470 & 83.32s & 269.28s &  3742725\\
			SF5-5/5 & 0.608s & 92.6\% & 208.76s & 47.0\% &   5031 & 3727 & 106.17s & 289.32s &  3679137\\
			\hline
			CE & 8.156s &   0.0 & 387.92s &   0.0 &   4977 & - & - & - &  7998000\\
			SF5-3/5 & 0.675s & 91.7\% & 207.46s & 46.5\% &   4977 & 3125 & 59.41s & 242.44s &  3868544\\
			SF5-4/5 & 0.671s & 91.8\% & {\bf 203.30s} & {\bf 47.6\%} &   4977 & 3421 & 79.83s & 265.60s &  3712166\\
			SF5-5/5 & 0.643s & 92.1\% & 205.79s & 46.9\% &   4977 & 3719 & 106.31s & 288.86s &  3624778\\
			\hline
			CE & 7.927s &   0.0 & 395.69s &   0.0 &   5062 & - & - & - &  7998000\\
			SF5-3/5 & 0.662s & 91.6\% & 213.31s & 46.1\% &   5062 & 3058 & 54.95s & 238.39s &  3988324\\
			SF5-4/5 & 0.731s & 90.8\% & {\bf 210.17s} & {\bf 46.9\%} &   5062 & 3385 & 77.36s & 263.75s &  3813221\\
			SF5-5/5 & 0.684s & 91.4\% & 213.29s & 46.1\% &   5062 & 3766 & 110.47s & 293.89s &  3696991\\
			\hline
			CE & 7.802s &   0.0 & 391.04s &   0.0 &   5028 & - & - & - &  7998000\\
			SF5-3/5 & 0.632s & 91.9\% & 210.46s & 46.2\% &   5028 & 3045 & 54.67s & 235.51s &  3963106\\
			SF5-4/5 & 0.765s & 90.2\% & {\bf 206.85s} & {\bf 47.1\%} &   5028 & 3442 & 82.39s & 266.24s &  3746261\\
			SF5-5/5 & 0.617s & 92.1\% & 208.99s & 46.6\% &   5028 & 3772 & 109.00s & 291.92s &  3654401\\
			\hline
\end{tabular}
\end{center}	
\caption{Runs to determine the best $\beta$. Above: ${\cal A}_g$. Below: stronger ${\cal A}_g$. Five instances, $n=4000$.}
\label{tab:beta4000}
\end{table}

\begin{table}
\begin{center}
\begin{tabular}{|lccccccccr|}
			\hline
			algo & $t_{100}$ & $\textrm{impr}_{100}$ & $T_{tot}$ & $\textrm{impr}_{tot}$ & $L$ & $s$ & $t(\beta)$ & $t_s$ & avgMpI\\
			\hline
			CE & 20.334s &   0.0 & 1564.66s &   0.0 &   7631 & - & - & - & 17997000\\
			$\beta=3/10$ & 1.576s & 92.2\% & 813.16s & 48.0\% &   7631 & 4703 & 210.29s & 965.96s &  8764537\\
			$\beta=4/10$ & 1.623s & 92.0\% & {\bf 776.44s} & {\bf 50.4\%} &   7631 & 5205 & 279.39s & 1068.07s &  8367592\\
			$\beta=5/10$ & 1.427s & 93.0\% & 780.06s & 50.1\% &   7631 & 5690 & 379.84s & 1166.74s &  8186185\\
			\hline
			CE & 20.485s &   0.0 & 1561.55s &   0.0 &   7615 & - & - & - & 17997000\\
			$\beta=3/10$ & 1.502s & 92.7\% & 831.81s & 46.7\% &   7615 & 4584 & 196.41s & 940.10s &  8897723\\
			$\beta=4/10$ & 1.490s & 92.7\% & 805.01s & 48.4\% &   7615 & 5172 & 290.37s & 1060.82s &  8418623\\
			$\beta=5/10$ & 1.363s & 93.3\% & {\bf 775.14s} & {\bf 50.4\%} &   7615 & 5707 & 382.87s & 1170.40s &  8208940\\
			\hline
			CE & 20.643s &   0.0 & 1602.77s &   0.0 &   7809 & - & - & - & 17997000\\
			$\beta=3/10$ & 1.274s & 93.8\% & 858.89s & 46.4\% &   7809 & 4746 & 209.29s & 972.95s &  8961183\\
			$\beta=4/10$ & 1.471s & 92.9\% & 833.04s & 48.0\% &   7809 & 5242 & 304.18s & 1074.07s &  8591764\\
			$\beta=5/10$ & 1.304s & 93.7\% & {\bf 817.93s} & {\bf 49.0\%} &   7809 & 5688 & 380.26s & 1167.16s &  8426212\\
			\hline
			CE & 20.376s &   0.0 & 1586.51s &   0.0 &   7740 & - & - & - & 17997000\\
			$\beta=3/10$ & 1.302s & 93.6\% & 821.90s & 48.2\% &   7740 & 4664 & 190.58s & 955.46s &  8956251\\
			$\beta=4/10$ & 1.296s & 93.6\% & {\bf 796.41s} & {\bf 49.8\%} &   7740 & 5320 & 298.60s & 1089.72s &  8443221\\
			$\beta=5/10$ & 1.293s & 93.7\% & 803.48s & 49.4\% &   7740 & 5765 & 390.54s & 1180.86s &  8283301\\
			\hline
			CE & 20.220s &   0.0 & 1572.16s &   0.0 &   7679 & - & - & - & 17997000\\
			$\beta=3/10$ & 1.310s & 93.5\% & 807.43s & 48.6\% &   7679 & 4692 & 195.04s & 960.19s &  8861558\\
			$\beta=4/10$ & 1.314s & 93.5\% & {\bf 784.95s} & {\bf 50.1\%} &   7679 & 5228 & 282.75s & 1069.80s &  8441652\\
			$\beta=5/10$ & 1.390s & 93.1\% & 785.08s & 50.1\% &   7679 & 5714 & 383.28s & 1170.17s &  8250685\\
			\hline
			\hline
			\hline
			CE & 21.537s &   0.0 & 1568.48s &   0.0 &   7631 & - & - & - & 17997000\\
			SF5-3/5 & 1.536s & 92.9\% & 807.99s & 48.5\% &   7631 & 4743 & 214.28s & 976.33s &  8690485\\
			SF5-4/5 & 1.441s & 93.3\% & 767.28s & 51.1\% &   7631 & 5245 & 275.83s & 1080.12s &  8289026\\
			SF5-5/5 & 1.445s & 93.3\% & {\bf 766.23s} & {\bf 51.1\%} &   7631 & 5759 & 382.25s & 1185.20s &  8080917\\
			\hline
			CE & 20.306s &   0.0 & 1564.66s &   0.0 &   7615 & - & - & - & 17997000\\
			SF5-3/5 & 1.315s & 93.5\% & 801.33s & 48.8\% &   7615 & 4633 & 185.37s & 949.92s &  8816084\\
			SF5-4/5 & 1.364s & 93.3\% & 773.51s & 50.6\% &   7615 & 5214 & 278.38s & 1068.87s &  8339003\\
			SF5-5/5 & 1.426s & 93.0\% & {\bf 773.02s} & {\bf 50.6\%} &   7615 & 5762 & 390.19s & 1182.55s &  8106742\\
			\hline
			CE & 20.198s &   0.0 & 1599.82s &   0.0 &   7809 & - & - & - & 17997000\\
			SF5-3/5 & 1.286s & 93.6\% & 828.43s & 48.2\% &   7809 & 4769 & 203.89s & 975.64s &  8906264\\
			SF5-4/5 & 1.262s & 93.7\% & {\bf 804.81s} & {\bf 49.7\%} &   7809 & 5279 & 285.02s & 1080.09s &  8517504\\
			SF5-5/5 & 1.261s & 93.8\% & 807.95s & 49.5\% &   7809 & 5739 & 381.82s & 1174.57s &  8338816\\
			\hline
			CE & 21.275s &   0.0 & 1588.44s &   0.0 &   7740 & - & - & - & 17997000\\
			SF5-3/5 & 1.358s & 93.6\% & 837.70s & 47.3\% &   7740 & 4767 & 213.47s & 978.97s &  8819357\\
			SF5-4/5 & 2.897s & 86.4\% & 793.45s & 50.0\% &   7740 & 5337 & 300.70s & 1095.57s &  8371821\\
			SF5-5/5 & 1.261s & 94.1\% & {\bf 786.02s} & {\bf 50.5\%} &   7740 & 5837 & 396.37s & 1197.37s &  8175083\\
			\hline
			CE & 20.884s &   0.0 & 1577.89s &   0.0 &   7679 & - & - & - & 17997000\\
			SF5-3/5 & 1.333s & 93.6\% & 809.86s & 48.7\% &   7679 & 4749 & 203.14s & 975.79s &  8771193\\
			SF5-4/5 & 1.323s & 93.7\% & {\bf 778.68s} & {\bf 50.7\%} &   7679 & 5303 & 291.59s & 1089.19s &  8342284\\
			SF5-5/5 & 1.275s & 93.9\% & 784.07s & 50.3\% &   7679 & 5814 & 402.00s & 1193.69s &  8136053\\
			\hline
		\end{tabular}
\end{center}
\caption{Runs to determine the best $\beta$. Above: ${\cal A}_g$. Below: stronger ${\cal A}_g$. Five instances, $n=6000$.}
\label{tab:beta6000}
\end{table}

\begin{table}
\begin{center}
\begin{tabular}{|lccccccccr|}
			\hline
			algo & $t_{100}$ & $\textrm{impr}_{100}$ & $T_{tot}$ & $\textrm{impr}_{tot}$ & $L$ & $s$ & $t(\beta)$ & $t_s$ & avgMpI\\
			\hline
			CE & 39.623s &   0.0 & 4144.64s &   0.0 &  10310 & - & - & - & 31996000\\
			$\beta=3/10$ & 2.794s & 92.9\% & 2156.99s & 48.0\% &  10310 & 6338 & 533.90s & 2549.07s & 15535904\\
			$\beta=4/10$ & 2.532s & 93.6\% & 2080.01s & 49.8\% &  10310 & 7060 & 752.10s & 2839.14s & 14778140\\
			$\beta=5/10$ & 2.503s & 93.7\% & {\bf 2039.32s} & {\bf 50.8\% }&  10310 & 7720 & 997.89s & 3104.02s & 14428083\\
			\hline	
			CE & 40.107s &   0.0 & 4148.15s &   0.0 &  10273 & - & - & - & 31996000\\
			$\beta=3/10$ & 2.469s & 93.8\% & 2077.86s & 49.9\% &  10273 & 6399 & 515.50s & 2583.15s & 15321545\\
			$\beta=4/10$ & 2.442s & 93.9\% & {\bf 2030.09s} & {\bf 51.1\%} &  10273 & 7041 & 725.01s & 2842.31s & 14675128\\
			$\beta=5/10$ & 2.436s & 93.9\% & 2035.16s & 50.9\% &  10273 & 7621 & 962.40s & 3075.25s & 14387203\\
			\hline
			CE & 40.706s &   0.0 & 4186.00s &   0.0 &  10325 & - & - & - & 31996000\\
			$\beta=3/10$ & 3.009s & 92.6\% & 2156.82s & 48.5\% &  10325 & 6284 & 509.45s & 2545.75s & 15620156\\
			$\beta=4/10$ & 2.691s & 93.4\% & {\bf 2070.03s} & {\bf 50.5\%} &  10325 & 7073 & 751.12s & 2864.03s & 14835548\\
			$\beta=5/10$ & 2.669s & 93.4\% & 2073.94s & 50.5\% &  10325 & 7684 & 997.36s & 3111.89s & 14511706\\
			\hline
			CE & 40.999s &   0.0 & 4389.51s &   0.0 &  10476 & - & - & - & 31996000\\
			$\beta=3/10$ & 3.009s & 92.7\% & 2198.15s & 49.9\% &  10476 & 6396 & 543.12s & 2597.42s & 15711939\\
			$\beta=4/10$ & 2.691s & 93.4\% & {\bf 2119.14s} & {\bf 51.7\%} &  10476 & 7178 & 779.27s & 2914.66s & 14924554\\
			$\beta=5/10$ & 2.727s & 93.3\% & 2128.44s & 51.5\% &  10476 & 7769 & 1029.95s & 3155.46s & 14636145\\
			\hline
			CE & 39.897s &   0.0 & 4198.46s &   0.0 &  10452 & - & - & - & 31996000\\
			$\beta=3/10$ & 2.377s & 94.0\% & 2183.58s & 48.0\% &  10452 & 6301 & 504.86s & 2528.06s & 15801765\\
			$\beta=4/10$ & 2.663s & 93.3\% & 2144.16s & 48.9\% &  10452 & 7027 & 761.81s & 2820.49s & 15056845\\
			$\beta=5/10$ & 2.503s & 93.7\% & {\bf 2122.64s} & {\bf 49.4\%} &  10452 & 7701 & 1015.08s & 3091.68s & 14743092\\
			\hline
			\hline
			\hline
			CE & 40.621s &   0.0 & 4211.89s &   0.0 &  10310 & - & - & - & 31996000\\
			SF5-3/5 & 2.892s & 92.9\% & 2183.92s & 48.1\% &  10310 & 6394 & 557.21s & 2608.32s & 15407270\\
			SF5-4/5 & 3.283s & 91.9\% & 2079.49s & 50.6\% &  10310 & 7117 & 775.28s & 2904.39s & 14645139\\
			SF5-5/5 & 2.591s & 93.6\% & {\bf 2054.86s} & {\bf 51.2\%} &  10310 & 7827 & 1043.87s & 3194.56s & 14268398\\
			\hline
			CE & 40.189s &   0.0 & 4201.60s &   0.0 &  10273 & - & - & - & 31996000\\
			SF5-3/5 & 2.468s & 93.9\% & 2079.91s & 50.5\% &  10273 & 6434 & 513.44s & 2638.89s & 15221943\\
			SF5-4/5 & 2.395s & 94.0\% & 2026.34s & 51.8\% &  10273 & 7121 & 743.18s & 2920.47s & 14528187\\
			SF5-5/5 & 2.444s & 93.9\% & {\bf 2024.17s} & {\bf 51.8\%} &  10273 & 7720 & 982.99s & 3163.26s & 14224719\\
			\hline
			CE & 40.854s &   0.0 & 4256.59s &   0.0 &  10325 & - & - & - & 31996000\\
			SF5-3/5 & 4.029s & 90.1\% & 2165.45s & 49.1\% &  10325 & 6329 & 514.73s & 2605.41s & 15510828\\
			SF5-4/5 & 2.856s & 93.0\% & {\bf 2100.96s} & {\bf 50.6\%} &  10325 & 7151 & 796.75s & 2943.48s & 14687474\\
			SF5-5/5 & 2.739s & 93.3\% & 2107.66s & 50.5\% &  10325 & 7792 & 1066.36s & 3209.07s & 14335403\\
			\hline
			CE & 41.832s &   0.0 & 4276.16s &   0.0 &  10476 & - & - & - & 31996000\\
			SF5-3/5 & 2.696s & 93.6\% & 2200.29s & 48.5\% &  10476 & 6442 & 544.56s & 2629.27s & 15602750\\
			SF5-4/5 & 2.992s & 92.8\% & 2167.29s & 49.3\% &  10476 & 7241 & 837.21s & 2956.36s & 14783414\\
			SF5-5/5 & 4.052s & 90.3\% & {\bf 2123.13s} & {\bf 50.3\%} &  10476 & 7877 & 1062.49s & 3217.01s & 14457500\\
			\hline
			CE & 41.828s &   0.0 & 4269.96s &   0.0 &  10452 & - & - & - & 31996000\\
			SF5-3/5 & 2.426s & 94.2\% & 2192.43s & 48.7\% &  10452 & 6370 & 519.82s & 2598.16s & 15665979\\
			SF5-4/5 & 3.408s & 91.9\% & {\bf 2134.77s} & {\bf 50.0\%} &  10452 & 7099 & 762.58s & 2897.57s & 14920109\\
			SF5-5/5 & 2.397s & 94.3\% & 2139.26s & 49.9\% &  10452 & 7812 & 1059.19s & 3188.45s & 14578944\\
			\hline
	\end{tabular}
\end{center}
\caption{Runs to determine the best $\beta$. Above: ${\cal A}_g$. Below: stronger ${\cal A}_g$. Five instances, $n=8000$.}
\label{tab:beta8000}
\end{table}

\paragraph{Minor improvements and technical details.}
For the sake of simplicity, in this paper we have presented the algorithm in a basic form although some small improvements 
are still possible. In particular, we can strengthen the definition of a \good\ edge in such a way that it becomes more
difficult for an edge to be expanded. Furthermore, in the basic form discussed so far, some moves $\mu(i,j)$ could be evaluated twice, and this happens when 
both edges $\{i,i+1\}$ and $\{j,j+1\}$ are \good. There is a simple workaround to avoid this double evaluation, and it is described in Appendix B, together with the stricter definition of \good\ edges.
We remark that the theoretical results, in particular the good average-case complexity, are already obtained by the basic form. Furthermore, we ran some computational experiments to evaluate the possible improvements and found out that their impact is quite minor (say, less than $10\%$ of time savings) while the algorithm becomes definitely less readable. For this reason, we have chosen to briefly describe these minor improvements only in the Appendix.

\section{Conclusions}
\label{sec:concl}

In this work we have described a new exact strategy for finding the
best \opt{2} move in a given tour. A nice feature of this strategy is, generally speaking, its simplicity. In particular, the blind version of our algorithm amounts to a 1-line code change to the standard \opt{2} procedure which is  sufficient to lower the complexity of of this step for many local search iterations. Improving the \opt{2} search (a popular algorithm) for the TSP (a fundamental problem) can have 
a certain impact, especially since the this \myquote{hack} is so trivial that can be readily incorporated   in many simple applications previously based on the basic 2-OPT local search.
We have also described a family of sub-quadratic average-case heuristics for the same problem, which can be tuned in such a way that they succeed with very high probability. 
Computational  experiments and theoretical analysis have shown that our strategies
outperform the classical two-nested-for algorithm 
for a good part of a local search convergence starting from a random tour. In particular, on a starting random uniform instance, we determine the best move  in average time $O(n^\frac{3}{2})$, while on a Euclidean instance, our procedure takes average linear time, which is the best possible 
complexity for this problem. 

We have then discussed how to adjust our procedure to obtain an effective local search algorithm, given that the performance worsens while we approach the local optima. We have therefore proposed a hybrid approach, made by our procedure for the first part followed by the standard algorithm for the rest of the convergence, and we have studied the best point at which we should make the switch. 

A direction for future research would be to integrate our algorithm, which is extremely fast in the first local search iterations, with
some ad-hoc improvements for the second part, in order to further improve its overall performance.


%
%
%

\bibliographystyle{abbrv}
\bibliography{tspbib}

	\section*{APPENDIX A: On the distance between random points}
	\label{app:lemma}
	
	\begin{lemma}
		\label{lem:maxd}
		Let $1.055 < d \le \sqrt{2}$, let $X$, $X'$ be two random points in the unit square and $D=||X-X'||$. Then  
		\[
		\prob{D > d}\le  \frac{7}{16} \left(1 - \sqrt{d^2 - 1}\right)^4.
		\]
	\end{lemma}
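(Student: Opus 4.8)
The plan is to bound the probability that the distance $D = \|X - X'\|$ between two uniform random points in the unit square $[0,1]^2$ exceeds $d$, for $d$ close to $\sqrt 2$. The key geometric observation is that for $D > d$ to happen, the two points must be near opposite corners of the square. More precisely, write $d^2 = 1 + t^2$ where $t = \sqrt{d^2 - 1} \in (0.34\ldots, 1]$ when $d \in (1.055, \sqrt 2]$; I would show that if $\|X - X'\|^2 = (x_1 - x_1')^2 + (x_2 - x_2')^2 > 1 + t^2$, then in at least one coordinate the two points' values differ by more than $t$ — indeed, since each coordinate difference is at most $1$, if both were $\le t$... that is not quite enough, so the real content is slightly more delicate: one needs that the points lie in a "corner pair" region. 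I would first set up that $X$ lies in one corner sub-square of side $\ell := 1 - t$ and $X'$ in the diagonally opposite one, because to have both $|x_1 - x_1'|$ and $|x_2 - x_2'|$ large simultaneously forces each point toward a corner.

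First I would reduce to a clean sufficient condition. Condition on $X = (x_1, x_2)$. For fixed $X$, the set of $X' \in [0,1]^2$ with $\|X - X'\| > d$ is the intersection of the unit square with the exterior of a disk of radius $d$ centered at $X$; its area is maximized when $X$ is at a corner, giving area $1 - \tfrac14 \pi d^2$ in the best case, but for a union bound over the four corner regions I want a cruder, factorized estimate. The cleanest route: show that $\{\|X-X'\| > d\}$ implies that $X$ and $X'$ lie in diagonally opposite corner squares $Q$ and $Q'$ each of side $\ell = 1 - t = 1 - \sqrt{d^2-1}$. Then $\prob{D > d} \le 4 \cdot \prob{X \in Q,\ X' \in Q'}$ where the factor $4$ counts the (two unordered, hence accounting for the ordered pair plus the two diagonal directions gives $2 \times 2 = 4$) corner-pair configurations; and $\prob{X \in Q,\ X' \in Q'} = \ell^2 \cdot \ell^2 = \ell^4 = (1 - \sqrt{d^2-1})^4$ by independence. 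That would give the bound $4(1-\sqrt{d^2-1})^4$, which is worse than the claimed $\tfrac{7}{16}(1-\sqrt{d^2-1})^4$. So the crude union bound is not tight enough, and I need to squeeze the constant down from $4$ to $7/16$.

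To get the constant $7/16$, I would not treat the corner squares as full squares but instead integrate. Fix the diagonal direction (say $X$ near the bottom-left, $X'$ near the top-right), contributing a factor $2$ for the two diagonal orientations (the off-diagonal orientations contribute nothing once $t > \tfrac12$, since then you cannot have both coordinate gaps exceed... again needs care — actually for $d > \sqrt{5}/... $; in any case the relevant range $d > 1.055$ may still allow anti-diagonal pairs, so I'd handle all four but show two of them contribute a smaller-order or comparable term that still fits under $7/16$). Within one orientation, I would compute $\iint_{Q} \mathrm{Area}\big(\{X' : \|X - X'\| > d\} \cap [0,1]^2\big)\, dx_1\, dx_2$ exactly, using that for $X$ in the bottom-left corner square the excluded-for-$X'$ region is a small circular segment near the top-right corner. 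Parametrizing, the area of the lens-shaped region near the top-right corner cut off by the circle of radius $d$ around $X$ is a function that can be bounded above by a constant times $\ell^2$ with the constant strictly less than $1$; combined with $\mathrm{Area}(Q) = \ell^2$ and the orientation count, the product of all the small constants should land at $7/16$. The main obstacle — and where I expect to spend the real effort — is exactly this: getting a clean, rigorous upper bound of the form $(\text{const})\,\ell^4$ on the double integral of the circular-segment area, with the constant sharp enough to be $7/16$. This is a calculus estimate on the geometry of a circle-in-a-square near a corner; I would bound the segment area by inscribing it in a right triangle (or a rectangle) whose legs are controlled by $\ell$ and by how far $X$ sits from the corner, then integrate that linear/quadratic majorant over $Q$. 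I would also need to verify that the hypothesis $d > 1.055$ is precisely what makes the relevant corner squares disjoint and the triangle-majorant valid (e.g., that $\ell = 1 - \sqrt{d^2-1} < \tfrac12$, i.e. $\sqrt{d^2-1} > \tfrac12$, i.e. $d > \sqrt{5}/2 \approx 1.118$ — if $1.055$ is meant to be slightly below this, the argument must tolerate marginally overlapping or larger corner squares, which I'd absorb into the constant). Once that geometric integral is pinned down, assembling the final inequality $\prob{D > d} \le \tfrac{7}{16}(1 - \sqrt{d^2-1})^4$ is just bookkeeping.
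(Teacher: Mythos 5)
Your opening reduction is correct and is actually a nice elementary observation: if $\|X-X'\|^2 > d^2 = 1+t^2$ with $t=\sqrt{d^2-1}$, then since each coordinate gap is at most $1$, \emph{both} coordinate gaps must exceed $t$, so the two points lie in diagonally opposite corner squares of side $\ell = 1-t$; this yields $\prob{D>d}\le 4\ell^4$ cleanly (no hedging needed). But from there the proposal does not actually prove the stated bound. The entire passage from the constant $4$ down to $\tfrac{7}{16}$ is deferred to an unspecified calculus estimate ("the product of all the small constants should land at $7/16$"), and there is no derivation showing that the double integral of the circular-segment area, summed over orientations, is at most $\tfrac{7}{16}\ell^4$ uniformly for $d\in(1.055,\sqrt2\,]$. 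That is precisely the content of the lemma, so as it stands the proof has a genuine gap: a plan to integrate plus the hope that the answer matches the target constant is not an argument, and nothing in your setup singles out $7/16$ rather than some other constant.

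For comparison, the paper reaches $\tfrac{7}{16}$ not by integration but by two nested containments of different sizes plus a case split. First, if $D>d$ then at least one point lies outside the disk of radius $d/2$ centered at the square's center, hence in a corner triangle $T$ with legs $z=\tfrac12(1-\sqrt{d^2-1})=\ell/2$ (area $z^2/2$) --- note this is much smaller than your corner square of side $\ell$. Second, given a point in $T$, the other point must lie outside the circle of radius $d$ centered at that corner vertex, hence in the \emph{opposite} corner triangle $L$ with legs $2z$ (area $2z^2$); the hypothesis $d>1.055$ enters here, to guarantee $\sqrt{1+z^2}\le d$ so that non-opposite corners are excluded (not, as you guessed, to make the corner squares disjoint --- that would need $d>\sqrt5/2\approx1.118$). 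Finally, splitting the event according to whether $X\in T^i$ or $X\in L^i\setminus T^i$ (to avoid double counting) gives $\tfrac{z^2}{2}\cdot 2z^2+\tfrac{3z^2}{2}\cdot\tfrac{z^2}{2}=\tfrac74 z^4$ per ordered pair of opposite corners, and summing over the four ordered pairs yields $7z^4=\tfrac{7}{16}\ell^4$. If you want to salvage your route, you would have to actually carry out and bound the segment-area integral (the true behavior near $d=\sqrt2$ is about $\tfrac16\ell^4$, so the bound is attainable), but that work is exactly what is missing from your proposal.
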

	\myproof{
		Consider Figure \ref{fig:probD}.
		In order for two points to have distance greater than $d$, they must not fall within a circle of ray $d/2$. We draw such a circle with center in $(0,0)$ and look at the
		intersections of the circle and the unit square.
		From Pythagoras' theorem, we get 
		\[
		y = \frac{1}{2}\sqrt{d^2-1}
		\]
		and then
		\[
		z = \frac{1}{2}-y = \frac{1}{2}\left(1- \sqrt{d^2-1}\right).
		\]
		For two points to have distance greater than $d$ at least one of them should fall out of the
		circle, i.e., in the corners, each of which is an area of \myquote{triangular} shape but with a curve basis. We are going to relax this, and require that the point must fall within one of the four triangles with two sides of length $z$ in the corners (this way we are overestimating the probability).
		Let us call $T^1,\ldots,T^4$ these squares, starting from the top-left and proceeding counter-clockwise.
		
		Once a point is in $T\in\{T^1,\ldots,T^4\}$, the other point must be at distance greater than $d$ from it.  
		The distance between two points in triangles which are not opposite to each other is at most $d$,
		as it can be checked
		by noticing that $\sqrt{1+z^2} \le d$ is always satisfied for $d\ge 1.055$ (where $\sqrt{1+z^2}$ is the maximum distance between two
		non-opposite triangles $T$). 
		Therefore we have to look for the second point in the opposite corner.

		\begin{figure}[t!]
			\caption{Study of $\prob{D>d}$.}
			\label{fig:probD}
			
			{\begin{center}
					\begin{tikzpicture}[scale=6]
						\draw [dashed](0,0) circle (0.6cm);
						\draw (-0.5,-0.5) -- (1/2,-1/2) -- (1/2,1/2) -- (-1/2,1/2) -- (-1/2,-1/2);
						\draw (0,0)--(0.5, 0.331);
						\draw (0,0)--(0.331,0.5);
						\draw [dotted] (0.331,0.5)--(0.331,0.331);
						\draw [dotted] (0.331,-0.331)--(0.331,-0.5);
						\draw [dotted] (-0.331,0.5)--(-0.331,0.331);
						\draw [dotted] (-0.331,-0.331)--(-0.331,-0.5);
						\draw [dotted] (-0.5,0.331)--(-0.331, 0.331);
						\draw [dotted] (0.331,0.331)--(0.5, 0.331);
						\draw [dotted] (-0.5,-0.331)--(-0.331, -0.331);
						\draw [dotted] (0.331,-0.331)--(0.5, -0.331);
						\draw (0.331,0.5)--(0.5,0.331);
						\draw (-0.331,0.5)--(-0.5,0.331);
						\draw (-0.331,-0.5)--(-0.5,-0.331);
						\draw (0.331,-0.5)--(0.5,-0.331);
						\draw (-1/2+0.338,1/2)--(-1/2,1/2-0.338);
						
						\draw (1/2,-1/2) --(-1/2+0.338,1/2);
						\draw (1/2,-1/2) --(-1/2,1/2-0.338);
						\draw [->] (-3/4,0)--(3/4,0);
						\draw [->] (0,-3/4)--(0,3/4);
						\node at (0.525,0.16) {$y$};
						\node at (0.525,0.42) {$z$};
						\node at (0.16,0.525) {$y$};
						\node at (0.42,0.525) {$z$};
						\node at (0.25,0.22) {$\frac{d}{2}$};
						\node at (-0.03,-0.04) {0};
						\node at (-0.03,0.55) {$\frac{1}{2}$};
						\node at (-0.035,-0.55) {-$\frac{1}{2}$};
						\node at (0.46,-0.05) {$\frac{1}{2}$};
						\node at (-0.45,-0.05) {-$\frac{1}{2}$};
						\node at (-0.15,-0.16) {$d$};
						\node at (0.53,-0.53) {$V$};
						\node at (-0.53,0.53) {$V'$};
						
						\tikzset{shift={(1/2,-1/2)}}
						\draw[dashed] (123.5:1.2) arc (123.5:146.5:1.2) ;
					\end{tikzpicture}
					
				\end{center}
			}
		\end{figure}
		
		Since in $T$ the farthest from other points is precisely the corner vertex, say $V$, we draw
		a circle $C_V$ of ray $d$ and center $V$. Let $V'$ be
		the opposite corner, and  
		let $l$ be the distance between $V'$ and the intersection of the circle $C_V$ with the square. 
		By Pythagoras' theorem we get
		\[
		l = 1 - \sqrt{d^2 - 1} = 2z.
		\] 
		Let $L^1,\ldots,L^4$ be the triangle with two sides of length $l$ in the corners.
		From the previous discussion, if we pick a sequence $X$, $X'$ of two points at random, in order to have
		$D>d$ there must exist opposite corners $i$ and $j$ such that either $X$ in $T^i$ and $X'$ is in $L^j$, or $X$ is in $L^i\setminus T^i$ and $X'$ is
		in $T^j$ (notice that since we used 
		chords instead of circle arcs in these areas, these are necessary, but not 
		sufficient conditions, and so we are overestimating
		the probability that $D>d$).
		For fixed opposite $i,j$, we have
		\[
		\prob{(X\in T^i \land X'\in L^j) \lor (X\in L^i\setminus T^i \land X'\in T^j)} 
		= \frac{z^2}{2}\cdot \frac{4z^2}{2} + \frac{3z^2}{2}\cdot \frac{z^2}{2} = \frac{7}{4}z^4.
		\]
		Therefore, for the probability that the above situation is realized at one of the 4 possible pairs of opposite corners, we obtain
		\[
		\prob{D > d}\le 7 z^4  = \frac{7}{16} \left(1 - \sqrt{d^2 - 1}\right)^4.
		\]
	}
	
\section*{APPENDIX B: Minor improvements to the procedure}
	\label{app:tech}
	
	\newcommand{\cmin}[1]{c_{\min}(#1)}
	\paragraph{A slightly stronger condition for expansion.}
	
	In our research, we also investigated a stronger threshold to define a pivot as ``\good''. 
	For each vertex $u$, let us then denote by 
	\[
	\cmin{u}:= \min_v c(u,v)
	\]
	the minimum weight of any edge incident in $u$. Notice that for each edge $\{u,v\}$ it is
	\[
	c(u,v) \ge \frac{\cmin{u}+\cmin{v}}{2}
	\]

	Given the tour $(1,2,\ldots,n)$, let us define for all $i$
	\[
	c'(i,i+1) := c(i,i+1) - \frac{\cmin{i} + \cmin{i+1}}{2}
	\]

	Assume the current champion is $\hat \mu=\mu(\hat \imath , \hat \jmath)$.
	Then, for any move $\mu(i,j)$ better than $\hat \mu$ it must be
	\begin{align*}
		\Delta(\hat \mu)  & < \Delta(\mu(i,j)) \\
		& =  c(i,i+1) + c(j,j+1) - 
		\big(c(i,j) + c(i+1,j+1)\big) \\
		& \le  c(i,i+1) + c(j,j+1) - \frac{\cmin{i} + \cmin{j} + \cmin{i+1} + \cmin{j+1}}{2}\\
		& = c'(i,i+1) + c'(j,j+1)
	\end{align*}
	and hence
	\[
	\left (c'(i,i+1) > \frac{\D(\hat \mu)}{2}\right) \lor \left(c'(j,j+1) > \frac{\D(\hat \mu)}{2}\right)  
	\]
	i.e., the move must be 
	$(\D(\hat \mu)/2)$-\good\ with respect to the values $c'(v,v+1)$, rather than to the values $c(v,v+1)$ that were used before, and,  clearly, $c'(v,v+1) < c(v,v+1)$.
	
	As far as the complexity of this strategy is concerned, notice that computing the
	values $\cmin{\cdot}$ takes $\Theta(n^2)$ time, but can be
	amortized since we need do it only once, at the beginning of a local search
	convergence. Once the $\cmin{\cdot}$ values have been computed in
	this preprocessing step, computing $c'(v,v+1)$ is $O(1)$.
	
	\paragraph{Avoiding double evaluation of moves.} If no precaution is taken, any move $\mu(i,j)$ could be possibly evaluated twice, namely, when both $\{i,i+1\}$ and $\{j,j+1\}$ are \good. In order to avoid this, we can keep a partially-filled array $\texttt{neverExp[]}$ and a counter $\texttt{c}$ giving its current size. The idea is that $\texttt{neverExp[1..c]}$ contains all pivots that were not expanded so far. Initially, $\texttt{c}=n$ and  $\texttt{neverExp[]}$ contains all indices $1,\ldots,n$. 
	In general, assume an edge $\{i,i+1\}$ is being expanded. Then, to create the moves $\mu(i,j)$, 
	instead of considering  $j\leftarrow\texttt{k}$ for all $\texttt{k}=1,\ldots,n$ 
	we consider $j\leftarrow\texttt{neverExp[k]}$ for $\texttt{k}=1,\ldots,\texttt{c}$. 
	When $j\ne i$, then we create the move $\mu(i,j)$ and increase $\texttt{k}$,
	but when $j=i$, we simply overwrite $\texttt{neverExp[k]}$ with
	$\texttt{neverExp[c]}$, decrease $\texttt{c}$ and stay at the same $\texttt{k}$. This way the array always contains indices that were never expanded, and so no moves can be  evaluated twice.  Furthermore, since the order of the pivots is irrelevant, we were able to make this update in time $O(1)$ and so the complexity of an expansion remains $\Theta(n)$ for all expansions.

\end{document}